\documentclass[11pt]{article}
\usepackage{fullpage,epsf,amssymb,amsthm,amsfonts,amsmath,latexsym, graphicx,array,extarrows,mathtools,mathstyle,mathrsfs,slashed}
\usepackage[font=footnotesize,labelfont=bf]{caption}
\usepackage[normalem]{ulem}
\usepackage{float}

\newcommand{\overbar}[1]{\mkern 1.5mu\overline{\mkern-1.5mu#1\mkern-1.5mu}\mkern 1.5mu}

\let\oldsqrt\sqrt

\def\sqrt{\mathpalette\DHLhksqrt}
\def\DHLhksqrt#1#2{
\setbox0=\hbox{$#1\oldsqrt{#2\,}$}\dimen0=\ht0
\advance\dimen0-0.2\ht0
\setbox2=\hbox{\vrule height\ht0 depth -\dimen0}
{\box0\lower0.4pt\box2}}

\setlength{\parindent}{0in}

\title{\bf{Conditions on the violation of the cluster decomposition property in QCD}} 
\author{\large{Peter Lowdon} \\
\ \\
\textit{\small{Physik-Institut, Universit\"at Z\"urich, Winterthurerstrasse 190, 8057 Z\"urich, Switzerland}} \\
\textit{\small{E-mail: lowdon@physik.uzh.ch}}}
\date{}
\begin{document}
\begin{flushright} ZU--TH 35/15   \end{flushright}
\vspace{15mm} 
{\let\newpage\relax\maketitle}
\setcounter{page}{1}
\pagestyle{plain}

\maketitle
\setcounter{page}{1}
\pagestyle{plain}
 
\abstract \noindent 
The behaviour of correlators at large distances plays an important role in the dynamics of quantum field theories. In many instances, correlators satisfy the so-called \textit{cluster decomposition property} (CDP), which means that they tend to zero for space-like asymptotic distances. However, under certain conditions it is possible for correlators to violate this property. In the context of quantum chromodynamics (QCD), a violation of the CDP for correlators of clusters involving coloured fields implies that the strength of the correlations between the coloured degrees of freedom in these clusters increases at large distances, which is a sufficient condition for confinement. In this paper we establish a criterion for when the CDP is violated. By applying this criterion to QCD, it turns out that certain lattice results involving the quark and gluon propagators can be interpreted as evidence that quarks and gluons are confined due to a violation of the CDP.

\pagebreak
\newtheorem{mydef}{Definition}
\newtheorem{theorem}{Theorem}
\newtheorem{prop}{Proposition}
\newtheorem{lemma}{Lemma}
\newtheorem{corr}{Corollary}
\newcommand{\lpipe}{\rule[-0.4ex]{0.90pt}{2.3ex}}
\setlength\fboxrule{0.5mm}
\setlength\fboxsep{4mm}

\newtheorem*{theorem*}{Theorem}

\section{Introduction}
\label{paper4_intro}

Establishing the structure of correlators in a quantum field theory (QFT) is central to understanding the characteristics of the theory. In particular, the space-like asymptotic behaviour of truncated correlators comprised of field clusters determines how the strength of the correlations between the field degrees of freedom in these clusters changes as the distance between the clusters grows~\cite{Strocchi76,Strocchi78}. This behaviour has been investigated several times in the literature\footnote{See~\cite{Haag58,Araki60,Ruelle62,Araki_Hepp_Ruelle62} and references within.}, and a result of particular importance is the \textit{cluster decomposition theorem}~\cite{Strocchi76,Araki_Hepp_Ruelle62}. This theorem characterises the asymptotic correlations between field clusters in QFTs which satisfy certain general axioms\footnote{See~\cite{Streater_Wightman64,Nakanishi_Ojima90,Haag96} for a further discussion of these axioms and their physical motivation.}. For QFTs which have a space of states with a positive-definite inner product, the cluster decomposition theorem implies that truncated correlators of field clusters tend to zero for large space-like distances~\cite{Araki_Hepp_Ruelle62}, and in the case where the theory has a mass gap $(0,M)$, the rate of vanishing is faster than any inverse power of the distance. Correlators which have this behaviour are said to preserve the \textit{cluster decomposition property} (CDP). Physically, this means that the correlation strength between field clusters always decreases asymptotically if the corresponding correlator preserves the CDP. However, in the case of quantised gauge theories such as QCD, the standard QFT axioms no longer apply because locality is lost due to the gauge symmetry of the theory. Nevertheless, one can restore locality by adopting a \textit{local quantisation}\footnote{One such example of this is the \textit{BRST quantisation} of Yang-Mills theories~\cite{Strocchi13}.}, which requires that the inner product in the space of states $\mathcal{V}$ is no longer positive-definite~\cite{Strocchi13}. Having a space of states $\mathcal{V}$ with an indefinite inner product implies many consequences, including the modification of the cluster decomposition theorem. This modification has the particularly important feature that correlators are permitted to violate the CDP~\cite{Strocchi76}. \\

\noindent
In the context of QCD, the modified cluster decomposition theorem has an important physical application -- it provides a mechanism by which coloured degrees of freedom can be confined. Specifically, if the truncated correlator of coloured field clusters violates the CDP, then the correlations between the coloured degrees of freedom in these clusters are not damped, no matter how far they are separated. Thus, the measurement of a state associated with one of the coloured fields cannot be performed independently of the other, and hence the detection of individual coloured states is not possible, which is a sufficient condition for confinement~\cite{Nakanishi_Ojima90,Roberts_Williams_Krein91}. The failure of the CDP for such correlators is related to the inability to construct physical asymptotic states associated with the coloured fields~\cite{Strocchi78}. Obtaining a better understanding of the general conditions under which the CDP is violated is therefore clearly important if one wants to establish whether QCD confinement occurs in this manner. It is these issues which we aim to shed light on in this paper. \\

\noindent
The remainder of this paper is structured as follows: in Sec.~\ref{paper4_section1} the cluster decomposition theorem is discussed, and conditions concerning the violation of the CDP are derived; in Sec.~\ref{paper4_section2} the general Lorentz structure of QFT correlators is outlined and related to the CDP conditions in Sec.~\ref{paper4_section1}; in Sec.~\ref{paper4_section3} the results of Secs.~\ref{paper4_section1} and~\ref{paper4_section2} are applied to QCD, and in particular specific lattice calculations involving the quark and gluon propagators; and finally in Sec.~\ref{paper4_concl} the findings of the paper are summarised.

\section{The Cluster Decomposition Theorem}
\label{paper4_section1}

In locally quantised QFTs, a local field algebra is maintained at the expense of adding additional degrees of freedom to the theory, and results in a space of states $\mathcal{V}$ with an indefinite inner product~\cite{Strocchi13}. The physical states $\mathcal{V}_{\text{phys}} \subset \mathcal{V}$ are subsequently characterised by a so-called \textit{subsidiary condition}. For Yang-Mills gauge theories such as QCD, the most common local quantisation approach is \textit{BRST quantisation}. In this case gauge fixing and ghost terms are added to the Lagrangian density, and the corresponding subsidiary condition is: $Q_{B}\mathcal{V}_{\text{phys}}=0$, where $Q_{B}$ is the conserved charge associated with the residual BRST symmetry which the modified Lagrangian density possesses. The Hilbert space is defined by $\mathcal{H}:=\overbar{\mathcal{V}_{\text{phys}}\slash \mathcal{V}_{0}}$, where $\mathcal{V}_{0} \subset \mathcal{V}_{\text{phys}}$ contains the zero norm states, and the closure indicates that particular limit states are also in $\mathcal{H}$~\cite{Nakanishi_Ojima90}. As mentioned in Sec.~\ref{paper4_intro}, if a QFT has been locally quantised then the space-like asymptotic behaviour of cluster correlators is modified with respect to non-locally quantised theories. In particular, one has the following theorem~\cite{Strocchi76}: 
\ \\
\begin{theorem}[Cluster Decomposition]
\begin{align*}
\left|\langle 0 | \mathcal{B}_{1}(x_{1})\mathcal{B}_{2}(x_{2}) |0\rangle^{T} \right| \leq \left\{
     \begin{array}{ll}
      C_{1,2}[\xi]^{2N-\frac{3}{2}} e^{-M [\xi]}\left(1 + \frac{|\xi_{0}|}{[\xi]}\right) , & \text{with a mass gap $(0,M)$ in $\mathcal{V}$} \\
        \widetilde{C}_{1,2} [\xi]^{2N-2}\left(1 + \frac{|\xi_{0}|}{[\xi]^{2}}\right), & \text{without a mass gap in $\mathcal{V}$}
     \end{array}
     \right.
\end{align*}
where: $ \langle 0 | \mathcal{B}_{1}(x_{1})\mathcal{B}_{2}(x_{2}) |0\rangle^{T} = \langle 0 | \mathcal{B}_{1}(x_{1})\mathcal{B}_{2}(x_{2}) |0\rangle - \langle 0 | \mathcal{B}_{1}(x_{1})|0\rangle \langle 0 | \mathcal{B}_{2}(x_{2}) |0\rangle$, $N\in \mathbb{Z}_{\ge 0}$, $\xi = x_{1}-x_{2}$ is large and space-like, and $C_{1,2},\widetilde{C}_{1,2}$ are constants independent of $\xi$ and $N$. 
\label{cdt}
\end{theorem}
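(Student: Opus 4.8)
The plan is to reduce the bound to the analytic structure of the two-point Wightman function together with the \emph{order} of its spectral data, the latter being unrestricted precisely because $\mathcal V$ carries an indefinite inner product.

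First I would use translation invariance to write $\langle 0|\mathcal B_{1}(x_{1})\mathcal B_{2}(x_{2})|0\rangle^{T}=W^{T}(\xi)$ with $\xi=x_{1}-x_{2}$, and invoke the spectral condition --- which survives local quantisation, the translations still being generated by operators $P^{\mu}$ whose joint spectrum lies in $\overline{V}^{+}$. Its Fourier transform $\widetilde W^{T}(p)$ is then a tempered distribution with $\operatorname{supp}\widetilde W^{T}\subseteq\overline{V}^{+}$, and with $\operatorname{supp}\widetilde W^{T}\subseteq\overline{V}^{+}\cap\{p:p^{2}\ge M^{2}\}$ when $\mathcal V$ has a mass gap $(0,M)$; the subtraction defining the truncated correlator is exactly what removes the vacuum contribution $\propto\delta^{4}(p)$, so $p=0$ is excluded throughout. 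Being tempered, $\widetilde W^{T}$ has some finite order, and it is this order --- rather than anything positivity of the inner product would supply --- that the integer $N$ records: in a positive-metric theory $\widetilde W^{T}$ would be a bounded measure, forcing $N=0$ and recovering the classical Araki--Hepp--Ruelle estimates, whereas an indefinite inner product permits $\widetilde W^{T}$ to be a genuine distribution (dipole- and multipole-type ``ghost'' contributions), which is what degrades the bound to polynomial growth.

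Next I would use a generalised K\"all\'en--Lehmann / Jost--Lehmann--Dyson representation of the Lorentz-invariant amplitudes of $\widetilde W^{T}$ --- available under the assumed covariance, spectral and temperedness hypotheses --- of the schematic form $\widetilde W^{T}(p)=\int d\sigma(a^{2})\,\theta(p^{0})\,\delta(p^{2}-a^{2})+(\text{tensor and derivative terms})$, with $\sigma$ a distribution whose order is conveniently parametrised by $N$, supported in $[M^{2},\infty)$ with a mass gap and in $[0,\infty)$ without one. Integrating by parts moves all these derivatives onto the elementary density, whose Fourier transform at space-like $\xi$ is the free two-point function $\Delta^{+}_{a}(\xi)$; this is a regular, analytic function there, and a complex Lorentz boost Wick-rotating the space-like direction of $\xi$ into imaginary time turns it into a convergent Euclidean Laplace integral, with explicit asymptotics $\Delta^{+}_{a}(\xi)\sim\mathrm{const}\cdot a\,[\xi]^{-3/2}e^{-a[\xi]}$ for $a>0$ (Bessel $K_{1}$) and $\Delta^{+}_{0}(\xi)\sim\mathrm{const}\cdot[\xi]^{-2}$. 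I would then estimate: with a mass gap, $a\ge M$ forces $e^{-a[\xi]}\le e^{-M[\xi]}$; moving the derivatives in $a^{2}$ carried by $\sigma$ onto $\Delta^{+}_{a}$ costs powers of $[\xi]$ summing, after counting, to at most $[\xi]^{2N}$ (the polynomial-in-$p$ factors become $\xi$-derivatives and do not raise the leading power); the dependence on the non-invariant data of $\xi$ --- equivalently on the matrix elements of the complexified boost and on the tensor structure --- is bounded by the correction $(1+|\xi_{0}|/[\xi])$; and integrating against $\sigma$ then yields $C_{1,2}[\xi]^{2N-3/2}e^{-M[\xi]}(1+|\xi_{0}|/[\xi])$. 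Without a mass gap there is no exponential, the elementary power is $[\xi]^{-2}$ rather than $[\xi]^{-3/2}e^{-M[\xi]}$, and the identical counting gives $\widetilde C_{1,2}[\xi]^{2N-2}(1+|\xi_{0}|/[\xi]^{2})$; the constants depend only on $\widetilde W^{T}$ and on the spacetime dimension, hence are independent of $\xi$ --- and of $N$, since for large $\xi$ enlarging $N$ only weakens the right-hand side.

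The step I expect to be the main obstacle is carrying out this Laplace/saddle-point estimate of the Euclidean integral \emph{rigorously and uniformly in $\xi$} for a general distributional weight $\sigma$ rather than a bounded measure: one must control the full massive two-point function across the crossover region $a[\xi]\sim 1$, justify interchanging the $a^{2}$-differentiation estimate with the integration against $\sigma$ near the bottom of the spectrum ($a=M$, respectively $a=0$), and verify that the finite order of $\widetilde W^{T}$ really does convert into precisely $[\xi]^{2N}$ with no hidden accumulation --- equivalently, that the generalised K\"all\'en--Lehmann weight genuinely has finite order in a locally quantised theory. The tensor and derivative structure of the composite coloured clusters $\mathcal B_{i}$ is then a bookkeeping matter: the extra polynomial factors in $p$ become additional $\xi$-derivatives, which I would absorb into the $(1+|\xi_{0}|/\ldots)$ corrections rather than into the leading exponent.
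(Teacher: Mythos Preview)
The paper does not actually prove Theorem~\ref{cdt}. It is stated as a known result and attributed to Strocchi~\cite{Strocchi76} (building on Araki--Hepp--Ruelle~\cite{Araki_Hepp_Ruelle62}); the paper then \emph{uses} it, together with the Bros--Epstein--Glaser characterisation of $N$ (Theorem~\ref{bros_epstein_glaser}), as input for the subsequent results. So there is no ``paper's own proof'' to compare against.

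That said, your sketch is broadly in the spirit of the original literature: reduce to $W^{T}(\xi)$ by translation invariance, use the spectral condition to localise $\widehat{W}^{T}$ in $\overline{V}^{+}$ (or in $\overline{V}^{+}\cap\{p^{2}\ge M^{2}\}$ with a mass gap), identify $N$ with the distributional order of $\widehat{W}^{T}$, and then estimate the inverse Fourier transform at large space-like $\xi$. Where your route departs from the standard one is in organising the estimate around a generalised K\"all\'en--Lehmann weight and the explicit Bessel asymptotics of $\Delta_{a}^{+}$. The Araki--Hepp--Ruelle/Strocchi argument instead works more directly with tube analyticity: one writes $\widehat{W}^{T}$ as a finite sum of derivatives of continuous functions of polynomial growth (this is precisely the content of the Bros--Epstein--Glaser bound), passes to the analytic continuation of $W^{T}$ in the forward tube, and then exploits a complex Lorentz boost to the Euclidean region to read off the decay/growth rate. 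Your K\"all\'en--Lehmann route can be made to work, but the step you yourself flag --- controlling the Laplace estimate uniformly for a \emph{distributional} weight $\sigma$, in particular near the spectral threshold and across $a[\xi]\sim 1$ --- is exactly where the tube-analyticity approach is cleaner: it bypasses the need to resolve $\widehat{W}^{T}$ into a mass superposition at all, and the factor $[\xi]^{2N}$ emerges directly from counting the derivatives in the Bros--Epstein--Glaser representation rather than from differentiating $\Delta_{a}^{+}$ in $a^{2}$. If you want to complete your version rigorously, that threshold-uniformity is the genuine technical gap to close.
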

\ \\
\noindent
The \textit{cluster correlator} $\langle 0 | \mathcal{B}_{1}(x_{1})\mathcal{B}_{2}(x_{2}) |0\rangle$ is defined by:
\begin{align}
\langle 0 | \mathcal{B}_{1}(x_{1})\mathcal{B}_{2}(x_{2}) |0\rangle &:= \langle 0 | \phi_{1}(f_{x_{1}}^{(1)})\phi_{2}(f_{x_{2}}^{(2)}) |0\rangle \nonumber \\
&= \int d^{4}y_{1} d^{4}y_{2} \, \langle 0 | \phi_{1}(y_{1})\phi_{2}(y_{2}) |0\rangle f^{(1)}(y_{1}-x_{1}) f^{(2)}(y_{2}-x_{2})
\label{corr_smear}
\end{align} 
where $f^{(i)} \in \mathcal{D}(\mathbb{R}^{1,3})$, and $\phi_{k}$ are the quantised \textit{basic fields} in the theory\footnote{The so-called \textit{basic fields}~\cite{Haag58,Araki_Hepp_Ruelle62} consist of polynomials of the simplest fields, such as the quark $\psi$ and gluon $A_{\mu}^{a}$ fields in QCD.}. The test functions $f^{(i)}$ are chosen to have compact support because this allows the operators $\phi_{1}(f_{x_{1}}^{(1)})$ and $\phi_{2}(f_{x_{2}}^{(1)})$ to be interpreted as clusters containing the field degrees of freedom of $\phi_{1}$ and $\phi_{2}$, centred around the points $x_{1}$ and $x_{2}$ respectively. The precise definition of $[\xi]$ is outlined in~\cite{Araki_Hepp_Ruelle62}, but for large space-like distances $[\xi]$ can be approximated by $|{\boldsymbol{\xi}}|:= r$~\cite{Nakanishi_Ojima90}. It follows from Theorem~\ref{cdt} that in a locally quantised QFT such as QCD, if $\mathcal{V}$ does not have a mass gap, then the correlation strength $F^{\phi_{1}\phi_{2}}(r)$ between the clusters of the fields $\phi_{1}$ and $\phi_{2}$ has the asymptotic behaviour:  
\begin{align}
F^{\phi_{1}\phi_{2}}(r) \sim r^{2N-2}, \hspace{3mm} \text{for $r \rightarrow \infty$}
\label{F_corr}
\end{align}
This has the important consequence that the CDP can be violated for $N \neq 0$, which as discussed in Sec.~\ref{paper4_intro}, is particularly relevant in the context of confinement in QCD. \\

\noindent
In axiomatic formulations of QFT~\cite{Streater_Wightman64}, the basic field correlators $\langle 0 | \phi_{1}(y_{1})\phi_{2}(y_{2}) |0\rangle = T_{(1,2)}(y_{1}-y_{2})$ are defined to be tempered distributions $\mathcal{S}'(\mathbb{R}^{1,3})$, and hence their Fourier transforms $\widehat{T}_{(1,2)}(p) = \mathcal{F}\left[T_{(1,2)}(y_{1}-y_{2})\right]$ are also in $\mathcal{S}'(\mathbb{R}^{1,3})$. Moreover, since 
\begin{align*}
\langle 0 | \mathcal{B}_{1}(x_{1})\mathcal{B}_{2}(x_{2}) |0\rangle^{T} = \mathcal{T}_{(1,2)}^{T}(x_{1}-x_{2})
\end{align*}
is the convolution of $\langle 0 | \phi_{1}(y_{1})\phi_{2}(y_{2}) |0\rangle^{T}= T_{(1,2)}^{T}(y_{1}-y_{2})$ with Schwartz test functions $f^{(i)} \in \mathcal{D}(\mathbb{R}^{1,3}) \subset \mathcal{S}(\mathbb{R}^{1,3})$, it follows that both $\mathcal{T}_{(1,2)}^{T}$ and its Fourier transform $\widehat{\mathcal{T}}_{(1,2)}^{T}(p)$ are tempered distributions~\cite{Strichartz94}. Due to the \textit{spectral condition} axiom\footnote{The \textit{spectral condition} axiom in QFT is related to the physical assumption that the energy spectrum of the theory is bounded from below~\cite{Strocchi13}.}, $\widehat{T}_{(1,2)}^{T}(p)$ is also defined to have support in the closed forward light cone $\overbar{V}^{+}$. By Eq.~(\ref{corr_smear}), $\widehat{\mathcal{T}}_{(1,2)}^{T}(p)$ can be written:
\begin{align}
\widehat{\mathcal{T}}_{(1,2)}^{T}(p) = \hat{f}^{(1)}(-p)\hat{f}^{(2)}(p) \widehat{T}_{(1,2)}^{T}(p) = g(p) \widehat{T}_{(1,2)}^{T}(p)
\label{smear_p}
\end{align} 
where $\hat{f}^{(i)}= \mathcal{F}\left[f^{(i)}\right] \in \mathcal{S}(\mathbb{R}^{1,3})$, and $g\in \mathcal{S}(\mathbb{R}^{1,3})$. So the convolution of $T_{(1,2)}^{T}$ with test functions in position space becomes a multiplication of $\widehat{T}_{(1,2)}^{T}$ with test functions in momentum space. Since $\text{supp} \, \widehat{T}_{(1,2)}^{T} \subset \overbar{V}^{+}$, Eq.~(\ref{smear_p}) implies that $\widehat{\mathcal{T}}_{(1,2)}^{T}(p)$ must also have support in $\overbar{V}^{+}$. The parameter $N$ in Theorem~\ref{cdt} depends on the structure of $\widehat{\mathcal{T}}_{(1,2)}^{T}(p)$. In particular, $N$ is characterised by the following general theorem~\cite{Bros_Epstein_Glaser67}:
\ \\
\begin{theorem}[Bros-Epstein-Glaser] 
Let $\mathscr{T} \in \mathcal{S}'(\mathbb{R}^{1,3})$ be a tempered distribution with support in $\overbar{V}^{+}$. Then there exists a non-negative integer $N \in \mathbb{Z}_{\ge 0}$, and finite constant $C>0$, such that:
\begin{align*}
|\mathscr{T}(f)| \leq C \sum_{|\alpha| \leq N} \ \sup_{p \in\mathbb{R}^{1,3}} \left( 1+ \|p\| \right)^{N}|D^{\alpha}f(p)|, \hspace{5mm} \forall f\in \mathcal{S}(\mathbb{R}^{1,3})
\end{align*} 
where $\|p\|^{2} = \sum_{\mu=0}^{3} |p_{\mu}|^{2}$, $D^{\alpha} = \frac{\partial^{|\alpha|}}{ (\partial p_{0})^{\alpha_{0}} \cdots  (\partial p_{3})^{\alpha_{3}}}$, and: $|\alpha| = \alpha_{0} + \alpha_{1} + \alpha_{2} + \alpha_{3}$
\label{bros_epstein_glaser}
\end{theorem}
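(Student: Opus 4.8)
The plan is to derive this estimate directly from the general continuity (structure) theorem for tempered distributions, so that the forward-cone hypothesis $\operatorname{supp}\mathscr{T}\subset\overbar{V}^{+}$ enters only as the natural setting for the companion Fourier--Laplace statements of Bros-Epstein-Glaser rather than as something needed for the inequality itself. First I would recall that $\mathcal{S}(\mathbb{R}^{1,3})$ is a Fr\'echet space whose topology is generated by the directed, increasing family of norms $\rho_{k,m}(f)=\sup_{|\alpha|\le k}\ \sup_{p\in\mathbb{R}^{1,3}}(1+\|p\|^{2})^{m}|D^{\alpha}f(p)|$. Since $\mathscr{T}$ is, by definition of $\mathcal{S}'(\mathbb{R}^{1,3})$, a continuous linear functional on this space, there exist $k,m\in\mathbb{Z}_{\ge 0}$ and a finite $C_{0}>0$ such that $|\mathscr{T}(f)|\le C_{0}\,\rho_{k,m}(f)$ for every $f\in\mathcal{S}(\mathbb{R}^{1,3})$.

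Next I would homogenise the two indices. Setting $N=\max\{k,2m\}$ and using $(1+\|p\|^{2})^{m}\le(1+\|p\|)^{2m}\le(1+\|p\|)^{N}$ together with $|\alpha|\le k\le N$, one obtains $\rho_{k,m}(f)\le\sup_{|\alpha|\le N}\ \sup_{p}(1+\|p\|)^{N}|D^{\alpha}f(p)|\le\sum_{|\alpha|\le N}\ \sup_{p}(1+\|p\|)^{N}|D^{\alpha}f(p)|$, which after relabelling is exactly the claimed bound with $C=C_{0}$. The only genuinely delicate point along this route is bookkeeping: confirming that the seminorm family used to topologise $\mathcal{S}$ is equivalent to the one quoted --- up to dimension-dependent constants relating $\|p\|$, $|p|$ and $\sum_{\mu}p_{\mu}^{2}$, all of which are absorbed into $C$ --- and that continuity on a Fr\'echet space really does yield domination by a single member of a directed family.

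I would then note that this argument uses nothing about $\operatorname{supp}\mathscr{T}$; the cone hypothesis is what places the result within the Bros-Epstein-Glaser framework, since it is under that hypothesis that $\mathscr{T}$ admits a Fourier--Laplace transform holomorphic in the dual tube with polynomial growth governed by the same $N$. An alternative, more constructive proof that does exploit $\overbar{V}^{+}$ proceeds through a primitive representation: convolving $\mathscr{T}$ with a fixed approximate identity supported near the apex of $\overbar{V}^{+}$ and in the forward direction writes $\mathscr{T}=P(\partial)g$ with $g$ continuous, polynomially bounded and supported in a slightly enlarged forward cone; integrating by parts against $f$ and bounding $\int|g(p)|\,|P(\partial)f(p)|\,d^{4}p$ by a weighted supremum times $\int(1+\|p\|)^{-5}\,d^{4}p<\infty$ then yields the estimate with $N$ equal to $\deg P$ plus the growth order of $g$ plus $5$. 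In that route the main obstacle is localising the primitive --- arranging that the convolution kernel both inverts the differential operator and keeps the support inside the cone --- whereas in the route I would actually take there is no real obstacle, only the verification that the abstract continuity statement unwinds to the stated seminorm form.
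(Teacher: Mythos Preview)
Your proof is correct, and in fact the paper does not give a proof of this theorem at all: it is stated with a citation to Bros--Epstein--Glaser, followed only by the remark that ``Theorem~\ref{bros_epstein_glaser} is in fact a special case of the boundedness condition satisfied by all tempered distributions.'' Your main argument --- pulling the estimate directly from the continuity of $\mathscr{T}$ on the Fr\'echet space $\mathcal{S}(\mathbb{R}^{1,3})$ and then homogenising the seminorm indices into a single $N$ --- is precisely an unpacking of that one-line remark, so you are entirely aligned with the paper's viewpoint.

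Your observation that the support condition $\operatorname{supp}\mathscr{T}\subset\overbar{V}^{+}$ is not actually used in deriving the inequality is also consistent with the paper: the cone hypothesis is carried along because it is the setting in which the theorem is applied (to $\widehat{\mathcal{T}}_{(1,2)}^{T}$, which has support in $\overbar{V}^{+}$ by the spectral condition), not because the bound itself requires it. The alternative primitive-representation route you sketch is a genuine, more constructive alternative that the paper neither gives nor alludes to; it would buy an explicit handle on $N$ in terms of the degree of $P$ and the growth of $g$, at the cost of the localisation work you flag.
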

\ \\

\noindent
Theorem~\ref{bros_epstein_glaser} is in fact a special case of the boundedness condition satisfied by all tempered distributions~\cite{Strichartz94}. $N$ corresponds to the highest number of derivatives $|\alpha|$ which appear in this bound. It is important to note though that $N$ is not necessarily unique -- a distribution may have different representations which each satisfy boundedness conditions with different numbers of derivatives. A more meaningful parameter is the \textit{minimal} value of $N$ over all possible representations, which is called the \textit{order} of $\mathscr{T}$~\cite{Strichartz94}. In Theorem~\ref{cdt} one is interested in the \textit{leading} asymptotic distance behaviour, and so $N$ in this theorem corresponds to the order of $\widehat{\mathcal{T}}_{(1,2)}^{T}(p)$. To be consistent with Theorem~\ref{cdt}, throughout the rest of this paper it will be implicitly assumed that $N$ in Theorem~\ref{bros_epstein_glaser} corresponds to the order of $\mathscr{T}$. \\ 

\noindent
In light of Eq.~(\ref{F_corr}), it is clearly important to establish whether one can determine a condition for when $N=0$. By applying Theorem~\ref{bros_epstein_glaser}, one in fact has the following necessary and sufficient condition:
\ \\
\begin{theorem}
Given that $\mathscr{T} \in \mathcal{S}'(\mathbb{R}^{1,3})$ has support in $\overbar{V}^{+}$
\begin{align*}
N=0 \hspace{3mm} \Longleftrightarrow \hspace{3mm} \textit{$\mathscr{T}$ defines a finite measure} 
\end{align*} 
\label{N_measure}
\end{theorem}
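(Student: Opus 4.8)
The plan is to prove the two implications separately; the ``if'' direction is immediate, while the ``only if'' direction reduces to the Riesz representation theorem.

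For the direction ``$\mathscr{T}$ a finite measure $\Rightarrow N=0$'', I would argue as follows. Suppose $\mathscr{T}(f) = \int f \, d\mu$ for all $f \in \mathcal{S}(\mathbb{R}^{1,3})$, where $\mu$ is a finite (complex Radon) measure, $|\mu|(\mathbb{R}^{1,3}) < \infty$. Then
\begin{align*}
|\mathscr{T}(f)| = \left| \int f \, d\mu \right| \leq |\mu|(\mathbb{R}^{1,3}) \sup_{p \in \mathbb{R}^{1,3}} |f(p)|,
\end{align*}
which is exactly the bound of Theorem~\ref{bros_epstein_glaser} with $N = 0$ and $C = |\mu|(\mathbb{R}^{1,3})$. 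Since $N$ there is by convention the order of $\mathscr{T}$, i.e. the minimal admissible value, exhibiting a representation with $N=0$ forces $N = 0$.

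For the direction ``$N = 0 \Rightarrow \mathscr{T}$ a finite measure'', I would start from the bound $|\mathscr{T}(f)| \leq C \sup_{p} |f(p)|$ supplied by Theorem~\ref{bros_epstein_glaser}, which says precisely that $\mathscr{T}$ is continuous on $\mathcal{S}(\mathbb{R}^{1,3})$ for the uniform norm $\|\cdot\|_{\infty}$. Because $\mathcal{D}(\mathbb{R}^{1,3})$ --- and hence $\mathcal{S}(\mathbb{R}^{1,3})$ --- is dense in the Banach space $C_{0}(\mathbb{R}^{1,3})$ of continuous functions vanishing at infinity with norm $\|\cdot\|_{\infty}$, the functional $\mathscr{T}$ extends uniquely to a bounded linear functional on $C_{0}(\mathbb{R}^{1,3})$ of norm at most $C$. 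The Riesz representation theorem for $C_{0}$ on the locally compact Hausdorff space $\mathbb{R}^{1,3} \cong \mathbb{R}^{4}$ then yields a unique finite complex Radon measure $\mu$, with $|\mu|(\mathbb{R}^{1,3}) \leq C$, such that $\mathscr{T}(f) = \int f \, d\mu$ for all $f \in C_{0}$, and in particular for all $f \in \mathcal{S}(\mathbb{R}^{1,3})$; thus $\mathscr{T}$ defines a finite measure. To keep consistency with the hypothesis I would also note that, since $\mathscr{T}(f) = 0$ whenever $\text{supp}\, f$ is disjoint from $\overbar{V}^{+}$, the representing measure satisfies $\text{supp}\, \mu \subseteq \overbar{V}^{+}$.

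The two genuinely technical inputs are the density of $\mathcal{D}(\mathbb{R}^{1,3})$ in $(C_{0}(\mathbb{R}^{1,3}), \|\cdot\|_{\infty})$, which is standard (mollify and truncate), and the identification of the dual of $C_{0}$ with the space of finite complex Radon measures. I expect the main conceptual point to be simply recognising that the $N=0$ estimate of Theorem~\ref{bros_epstein_glaser} is nothing other than $\|\cdot\|_{\infty}$-continuity of $\mathscr{T}$; once that is seen, the rest is routine measure theory, and the support property in $\overbar{V}^{+}$ plays no role beyond being tracked through to $\mu$.
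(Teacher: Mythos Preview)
Your proof is correct and follows essentially the same strategy as the paper: recognise the $N=0$ bound as $\|\cdot\|_{\infty}$-continuity of $\mathscr{T}$, extend to a space of continuous functions, and invoke the Riesz representation theorem. The only cosmetic difference is that the paper extends to $C_{c}^{0}(\mathbb{R}^{1,3})$ via Hahn--Banach (then argues uniqueness separately), whereas you extend to $C_{0}(\mathbb{R}^{1,3})$ by density, which is slightly cleaner since uniqueness of the extension and finiteness of the representing measure come for free.
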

\begin{proof}[Proof {\bf{$(\Longrightarrow)$:}}] 
Since $N=0$, from Theorem~\ref{bros_epstein_glaser}:
\begin{align*}
|\mathscr{T}(f)| \leq C \sup_{p \in\mathbb{R}^{1,3}} |f(p)|, \hspace{5mm} \forall f\in \mathcal{S}(\mathbb{R}^{1,3})
\end{align*}  
for some finite constant $C>0$. The linear functional $\mathscr{T}$ is therefore bounded, and hence continuous with respect to the topology induced by the norm $\|\cdot\|_{\infty}:= \sup_{p \in\mathbb{R}^{1,3}} |\cdot|$. The restricted map $\mathscr{T}|_{C_{c}^{\infty}(\mathbb{R}^{1,3})}$ satisfies the same bound as $\mathscr{T}$ $\forall f\in C_{c}^{\infty}(\mathbb{R}^{1,3}) = \mathcal{S}(\mathbb{R}^{1,3})\cap C_{c}^{0}(\mathbb{R}^{1,3})$, and since $C_{c}^{\infty}(\mathbb{R}^{1,3})$ with the norm $\|\cdot\|_{\infty}$ is a linear subspace of the normed vector space $\left(C_{c}^{0}(\mathbb{R}^{1,3}),\|\cdot\|_{\infty}\right)$, $\mathscr{T}|_{C_{c}^{\infty}(\mathbb{R}^{1,3})}$ is also linear. Conversely, if a linear map $\mathscr{T}|_{C_{c}^{\infty}(\mathbb{R}^{1,3})}$ satisfies a tempered distribution bound, then this same bound must \textit{also} hold $\forall f\in \mathcal{S}(\mathbb{R}^{1,3})$, and so in this sense $\mathscr{T}|_{C_{c}^{\infty}(\mathbb{R}^{1,3})}$ (uniquely) defines $\mathscr{T}$~\cite{Strichartz94}. This means that one can think of $\mathscr{T}|_{C_{c}^{\infty}(\mathbb{R}^{1,3})}$ and $\mathscr{T}$ as corresponding to the \textit{same} tempered distribution. As a consequence of the \textit{Hahn-Banach Theorem}~\cite{Zhelobenko06} it follows that $\mathscr{T}|_{C_{c}^{\infty}(\mathbb{R}^{1,3})}$ (and hence $\mathscr{T}$) can be extended to a linear functional $\widetilde{\mathscr{T}}: C_{c}^{0}(\mathbb{R}^{1,3}) \rightarrow \mathbb{C}$ which satisfies the same bound: 
\begin{align*}
|\widetilde{\mathscr{T}}(g)| \leq C \|g\|_{\infty}, \hspace{5mm} \forall g\in C_{c}^{0}(\mathbb{R}^{1,3})
\end{align*}  
Moreover, due to this bound and the linearity of $\widetilde{\mathscr{T}}$ it follows that this extension is unique~\cite{Al-Gwaiz92}. Since $\widetilde{\mathscr{T}}$ is a bounded linear functional on $C_{c}^{0}(\mathbb{R}^{1,3})$, the \textit{Riesz Representation Theorem}~\cite{Rudin70} implies that there exists a unique finite (complex) regular Borel measure $\mu$ on $\mathbb{R}^{1,3}$ defined by:
\begin{align*}
\widetilde{\mathscr{T}}(g) = \int_{\mathbb{R}^{1,3}}  g \ d\mu
\end{align*}
In this sense $\widetilde{\mathscr{T}}$ is said to define a finite measure $\mu$. Since $\widetilde{\mathscr{T}}$ is an unique extension which is entirely specified by $\mathscr{T}$, one defines $\widetilde{\mathscr{T}} \equiv \mathscr{T}$ with the understanding that when $\mathscr{T}$ acts on functions in $C_{c}^{0}(\mathbb{R}^{1,3})\backslash C_{c}^{\infty}(\mathbb{R}^{1,3})$, this is defined by the specific (unique) limit of $\mathscr{T}$ acting on a sequence of regularised test functions~\cite{Al-Gwaiz92}.
\end{proof}
\begin{proof}[Proof {\bf{$(\Longleftarrow)$:}}] 
Assuming $\mu$ is a finite measure implies~\cite{Baake_Grimm13} that $\mathscr{T}$ satisfies the bound:
\begin{align*}
|\mathscr{T}(g)| \leq \widetilde{C}\|g\|_{\infty}, \hspace{5mm} \forall g\in C_{c}^{0}(\mathbb{R}^{1,3})
\end{align*} 
for some finite $\widetilde{C}>0$, and therefore the restricted map $\mathscr{T}|_{C_{c}^{\infty}(\mathbb{R}^{1,3})}$ also obeys this bound. From the discussion in the proof in the ($\Longrightarrow$) direction, $\mathscr{T}|_{C_{c}^{\infty}(\mathbb{R}^{1,3})}$ satisfying this bound is sufficient to imply that $\mathscr{T}$ must also have the same bound, but instead $\forall g \in \mathcal{S}(\mathbb{R}^{1,3})$. Since $|\mathscr{T}(g)| \leq \widetilde{C}\|g\|_{\infty} \, \forall g\in \mathcal{S}(\mathbb{R}^{1,3})$, it then follows from Theorem~\ref{bros_epstein_glaser} that $N=0$.    
\end{proof}

\ \\
\noindent
In the context of the cluster decomposition theorem (Theorem~\ref{cdt}), Theorem~\ref{N_measure} implies that $N=0$ if and only if $\widehat{\mathcal{T}}_{(1,2)}^{T}(p)$ defines a finite measure. So establishing whether $\widehat{\mathcal{T}}_{(1,2)}^{T}(p)$ defines a finite measure or not is the key to determining how the correlation strength $F^{\phi_{1}\phi_{2}}(r)$ between the clusters of fields $\phi_{1}$ and $\phi_{2}$ changes as $r \rightarrow \infty$. Since $\widehat{\mathcal{T}}_{(1,2)}(p)$ satisfies Eq.~(\ref{smear_p}), one has the following proposition:
\ \\ 
\begin{prop}
If $\widehat{\mathcal{T}}_{(1,2)}^{T}$ defines a measure then this measure is finite.
\label{paper4_prop1}
\end{prop}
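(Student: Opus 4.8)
The plan is to push the factorisation $\widehat{\mathcal{T}}_{(1,2)}^{T}(p)=g(p)\,\widehat{T}_{(1,2)}^{T}(p)$ of Eq.~(\ref{smear_p}) --- with $g\in\mathcal{S}(\mathbb{R}^{1,3})$ and $\widehat{T}_{(1,2)}^{T}\in\mathcal{S}'(\mathbb{R}^{1,3})$ supported in $\overbar{V}^{+}$ --- through Theorem~\ref{bros_epstein_glaser}, and then to use that multiplication by a Schwartz function annihilates polynomial growth. Applying Theorem~\ref{bros_epstein_glaser} to $\widehat{T}_{(1,2)}^{T}$ yields $N'\in\mathbb{Z}_{\ge 0}$ and $C'>0$ with $|\widehat{T}_{(1,2)}^{T}(h)|\le C'\sum_{|\alpha|\le N'}\sup_{p}(1+\|p\|)^{N'}|D^{\alpha}h(p)|$ for all $h\in\mathcal{S}(\mathbb{R}^{1,3})$. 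Since $\widehat{\mathcal{T}}_{(1,2)}^{T}(f)=\widehat{T}_{(1,2)}^{T}(gf)$, I would set $h=gf$, expand $D^{\alpha}(gf)$ by the Leibniz rule, and note that for the Schwartz function $g$ each constant $A_{\gamma}:=\sup_{p}(1+\|p\|)^{N'}|D^{\gamma}g(p)|$ is finite; the weight $(1+\|p\|)^{N'}$ is thereby absorbed into the $A_{\gamma}$, leaving an estimate $|\widehat{\mathcal{T}}_{(1,2)}^{T}(f)|\le\widetilde{C}\sum_{|\beta|\le N'}\sup_{p}|D^{\beta}f(p)|$ for all $f\in\mathcal{S}(\mathbb{R}^{1,3})$ --- a tempered-distribution bound carrying derivatives but no polynomial growth factor whatsoever.

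The next step is to test this estimate on a sequence that exhausts momentum space. Fix $\chi\in C_{c}^{\infty}(\mathbb{R}^{1,3})$ with $0\le\chi\le1$ and $\chi\equiv1$ near the origin, and put $f_{R}(p):=\chi(p/R)$, so that $\sup_{p}|D^{\beta}f_{R}(p)|=R^{-|\beta|}\sup_{p}|D^{\beta}\chi(p)|$. For $R\ge 1$ the right-hand side of the estimate then stays below the $R$-independent constant $C^{*}:=\widetilde{C}\sum_{|\beta|\le N'}\sup_{p}|D^{\beta}\chi(p)|$, i.e.\ $|\widehat{\mathcal{T}}_{(1,2)}^{T}(f_{R})|\le C^{*}$ uniformly in $R$. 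If $\widehat{\mathcal{T}}_{(1,2)}^{T}$ is given by integration against a \emph{positive} measure $\mu$, then $f_{R}\ge\mathbf{1}_{\{\|p\|\le R\}}\ge 0$ gives $\mu(\{\|p\|\le R\})\le\int f_{R}\,d\mu=|\widehat{\mathcal{T}}_{(1,2)}^{T}(f_{R})|\le C^{*}$, and letting $R\to\infty$ yields $\mu(\mathbb{R}^{1,3})\le C^{*}<\infty$. Thus $\mu$ is finite --- equivalently, by Theorem~\ref{N_measure}, $N=0$ for $\widehat{\mathcal{T}}_{(1,2)}^{T}$.

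The step I expect to be the genuine obstacle is the passage to complex or signed measures, where the derivative-seminorm bound just derived is by itself insufficient: one can exhibit signed measures that are bona fide tempered distributions of infinite total variation yet obey an estimate of exactly this type, so finiteness cannot be read off from that estimate alone. Here one has to use the factorisation itself rather than only its consequence: on the dense open set where the Schwartz prefactor $g$ does not vanish (its complement being the real zero set of the entire Paley--Wiener functions $\hat{f}^{(i)}$), $\widehat{T}_{(1,2)}^{T}=g^{-1}\widehat{\mathcal{T}}_{(1,2)}^{T}$ is itself a measure $\nu$ and $d|\widehat{\mathcal{T}}_{(1,2)}^{T}|=|g|\,d|\nu|$, so $|\widehat{\mathcal{T}}_{(1,2)}^{T}|(\mathbb{R}^{1,3})=\int|g|\,d|\nu|$, which is finite as soon as $|\nu|$ is polynomially bounded; controlling $|\nu|$ globally in this way (which, unlike in the positive case, does not follow merely from $\widehat{T}_{(1,2)}^{T}$ being a tempered distribution), together with showing that $\widehat{\mathcal{T}}_{(1,2)}^{T}$ carries no mass on the zero set of $g$, is the part of the argument on which I would spend the most effort.
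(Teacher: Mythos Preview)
Your argument is correct (at least in the positive case) but considerably more laborious than the paper's. The paper does not invoke Theorem~\ref{bros_epstein_glaser} or the Leibniz expansion at all: it simply takes a sequence $\psi_{k}\in\mathcal{D}(\mathbb{R}^{1,3})$ with $\psi_{k}\to 1$, writes $\widehat{\mathcal{T}}_{(1,2)}^{T}(\psi_{k})=\widehat{T}_{(1,2)}^{T}(g\psi_{k})$ from Eq.~(\ref{smear_p}), and observes that $g\psi_{k}\to g$ in $\mathcal{S}(\mathbb{R}^{1,3})$ because $g$ is Schwartz. Continuity of the tempered distribution $\widehat{T}_{(1,2)}^{T}$ then gives $\lim_{k}\widehat{\mathcal{T}}_{(1,2)}^{T}(\psi_{k})=\widehat{T}_{(1,2)}^{T}(g)$, which is finite since $g\in\mathcal{S}$. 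The finiteness criterion quoted in the paper --- that a tempered measure is finite iff this limit is finite --- then closes the proof in one line. Your derivative-seminorm estimate and the scaling $f_{R}(p)=\chi(p/R)$ reproduce exactly this conclusion, but the detour through the Bros--Epstein--Glaser bound is unnecessary: sequential continuity of $\widehat{T}_{(1,2)}^{T}$ on $\mathcal{S}$ already encodes it.

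As for the signed/complex case you flag as the ``genuine obstacle'': the paper does not treat it separately; it simply asserts the finiteness criterion (boundedness of $\widehat{\mathcal{T}}_{(1,2)}^{T}(\psi_{k})$ for $\psi_{k}\nearrow 1$) as the defining test, and the computation above shows this limit exists and equals the fixed number $\widehat{T}_{(1,2)}^{T}(g)$ regardless of sign issues. Your proposed workaround via $g^{-1}$ off the zero set of $g$ and control of the variation of $\nu$ is therefore not needed for the paper's level of rigour; if you wish to address it anyway, a cleaner route than localising on $\{g\neq 0\}$ is to note directly that the variation measure of $g\,\widehat{T}_{(1,2)}^{T}$ is dominated by $|g|$ times any polynomial majorant admitted by the temperedness of $\widehat{T}_{(1,2)}^{T}$, which is integrable since $g$ is Schwartz.
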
 
\begin{proof}
The condition for a tempered distribution to be finite as a measure is that the limit $\lim_{k \rightarrow \infty}|\widehat{\mathcal{T}}_{(1,2)}^{T}(\psi_{k})|$ is finite, where $\psi_{k} \in \mathcal{D}(\mathbb{R}^{1,3})$ are a sequence of test functions such that $\lim_{k \rightarrow \infty} \psi_{k} = 1$. Computing this limit one has:
\begin{align*}
\lim_{k \rightarrow \infty}|\widehat{\mathcal{T}}_{(1,2)}^{T}(\psi_{k})| = \lim_{k \rightarrow \infty}|g\widehat{T}_{(1,2)}^{T}(\psi_{k})| = \lim_{k \rightarrow \infty}|\widehat{T}_{(1,2)}^{T}(g\psi_{k})| = |\widehat{T}_{(1,2)}^{T}(g)| < \infty
\end{align*}  
where the final two identities follow from the continuity and boundedness of $\widehat{T}_{(1,2)}^{T}$ respectively.
\end{proof}

\noindent
So in contrast to $\widehat{T}_{(1,2)}^{T}$, which can potentially define either a finite or an unbounded measure, measures defined by $\widehat{\mathcal{T}}_{(1,2)}^{T}$ can only be finite, independent of the form of the test functions $f^{(i)}$ used to define the clusters. Now if one combines Theorem~\ref{N_measure} with Proposition~\ref{paper4_prop1}, this implies the important corollary:
\ \\
\begin{corr}
Given that $\widehat{\mathcal{T}}_{(1,2)}^{T} \in \mathcal{S}'(\mathbb{R}^{1,3})$ has support in $\overbar{V}^{+}$, and is the Fourier transform of a truncated cluster correlator
\begin{align*}
N=0 \hspace{3mm} \Longleftrightarrow \hspace{3mm} \textit{$\widehat{\mathcal{T}}_{(1,2)}^{T}$ defines a measure} 
\end{align*}
\label{N_corr}
\end{corr}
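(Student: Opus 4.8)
The plan is to obtain the corollary as an immediate consequence of Theorem~\ref{N_measure} together with Proposition~\ref{paper4_prop1}, the only subtlety being that Theorem~\ref{N_measure} is phrased in terms of a \emph{finite} measure whereas the corollary only refers to a measure, and this gap is precisely what Proposition~\ref{paper4_prop1} closes by exploiting the special structure of $\widehat{\mathcal{T}}_{(1,2)}^{T}$ recorded in Eq.~(\ref{smear_p}).

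First I would dispatch the ($\Longrightarrow$) direction: assuming $N=0$ and recalling that $\widehat{\mathcal{T}}_{(1,2)}^{T}\in\mathcal{S}'(\mathbb{R}^{1,3})$ has support in $\overbar{V}^{+}$, Theorem~\ref{N_measure} applies directly and yields that $\widehat{\mathcal{T}}_{(1,2)}^{T}$ defines a (finite) measure; in particular it defines a measure, which is all that is claimed.

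For the ($\Longleftarrow$) direction I would start from the hypothesis that $\widehat{\mathcal{T}}_{(1,2)}^{T}$ defines a measure. Here one invokes the hypothesis that $\widehat{\mathcal{T}}_{(1,2)}^{T}$ is the Fourier transform of a truncated cluster correlator, so that it satisfies Eq.~(\ref{smear_p}), i.e.\ $\widehat{\mathcal{T}}_{(1,2)}^{T}(p) = g(p)\,\widehat{T}_{(1,2)}^{T}(p)$ with $g\in\mathcal{S}(\mathbb{R}^{1,3})$. Proposition~\ref{paper4_prop1} then upgrades ``defines a measure'' to ``defines a finite measure''. With the finiteness of the measure in hand, the ($\Longleftarrow$) direction of Theorem~\ref{N_measure} applies and gives $N=0$, completing the equivalence.

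There is essentially no hard step here: the content has already been isolated in the two preceding results, and the corollary merely chains them. If anything, the point worth emphasising in the write-up is the logical role of Proposition~\ref{paper4_prop1} as the bridge that allows one to weaken ``finite measure'' to ``measure'' in the statement --- a weakening that is only legitimate because $\widehat{\mathcal{T}}_{(1,2)}^{T}$ arises from a smeared correlator and hence cannot define an unbounded measure.
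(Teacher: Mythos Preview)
Your proposal is correct and matches the paper's approach exactly: the paper states the corollary as an immediate consequence of combining Theorem~\ref{N_measure} with Proposition~\ref{paper4_prop1}, without spelling out further details. Your write-up in fact makes the logical chaining more explicit than the paper does, correctly identifying Proposition~\ref{paper4_prop1} as the bridge that weakens ``finite measure'' to ``measure'' in the $(\Longleftarrow)$ direction.
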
  

\noindent
Thus in order to determine if the Fourier transform of a such a cluster correlator has $N=0$, one is only required to prove that the tempered distribution $\widehat{\mathcal{T}}_{(1,2)}^{T}$ defines a measure, since its finiteness is guaranteed. Conversely, if $\widehat{\mathcal{T}}_{(1,2)}^{T}$ can be shown to not define a measure, this is sufficient to prove that $N \neq 0$. A similar theorem was previously established by~\cite{Strocchi78}, in which the condition $N \neq 0$ is linked to the failure of Fourier transformed cluster correlators to define measures in some suitable neighbourhood of the light cone $\{p^{2}=0\}$. Corollary~\ref{N_corr} is a generalisation of this theorem, since no restrictions are imposed on the neighbourhood in which $\widehat{\mathcal{T}}_{(1,2)}^{T}$ should fail to be a measure, only that it should fail to be a measure on $\mathbb{R}^{1,3}$. Moreover, the theorem by~\cite{Strocchi78} implicitly assumes that $\mathcal{V}$ has no mass gap, whereas the proof of Theorem~\ref{N_measure} (and hence Corollary~\ref{N_corr}) only depends on the support property of $\widehat{T}_{(1,2)}^{T}$. \\

\noindent
When the structural form of correlators in QFTs are analysed and discussed, this is done almost exclusively with respect to the correlators involving the basic fields, and hence $\widehat{T}_{(1,2)}^{T}$. However, Corollary~\ref{N_corr} emphasises that one can only determine if the CDP is violated or not if the structure of $\widehat{\mathcal{T}}_{(1,2)}^{T}$ is known. It is therefore important to establish what conditions $\widehat{\mathcal{T}}_{(1,2)}^{T}$ must satisfy in order to ensure that $\widehat{T}_{(1,2)}^{T}$ defines a measure, and vice versa. In this regard, one has the proposition:
\ \\
\begin{prop}
If \ $\widehat{T}_{(1,2)}^{T}$ defines a measure \hspace{2mm} $\Longrightarrow$ \hspace{2mm} $\widehat{\mathcal{T}}_{(1,2)}^{T}$ defines a measure
\label{paper4_prop2}
\end{prop}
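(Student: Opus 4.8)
The plan is to exhibit the representing measure explicitly: it should be the Schwartz factor $g$ of Eq.~(\ref{smear_p}) times the measure underlying $\widehat{T}_{(1,2)}^{T}$. By hypothesis there is a regular Borel measure $\mu$ on $\mathbb{R}^{1,3}$ (in general signed, and possibly unbounded) with $\widehat{T}_{(1,2)}^{T}(h)=\int_{\mathbb{R}^{1,3}} h\,d\mu$ for every $h\in\mathcal{D}(\mathbb{R}^{1,3})$. The first observation is that $gf\in\mathcal{D}(\mathbb{R}^{1,3})$ whenever $f\in\mathcal{D}(\mathbb{R}^{1,3})$, since $g$ is smooth and $gf$ inherits the compact support of $f$; so no question of extending $\widehat{T}_{(1,2)}^{T}$ beyond $\mathcal{D}(\mathbb{R}^{1,3})$ arises at this stage. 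Using Eq.~(\ref{smear_p}) I would then write, for all $f\in\mathcal{D}(\mathbb{R}^{1,3})$,
\begin{align*}
\widehat{\mathcal{T}}_{(1,2)}^{T}(f)=\big(g\,\widehat{T}_{(1,2)}^{T}\big)(f)=\widehat{T}_{(1,2)}^{T}(gf)=\int_{\mathbb{R}^{1,3}} gf\,d\mu=\int_{\mathbb{R}^{1,3}} f\,d\nu,
\end{align*}
where $\nu$ is the set function $\nu(A):=\int_A g\,d\mu$. Thus the proposition reduces to checking that $\nu$ is a bona fide measure which represents the tempered distribution $\widehat{\mathcal{T}}_{(1,2)}^{T}$.

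For this, note that $g$ is continuous and bounded and $\mu$ is locally finite, so $\nu=g\,d\mu$ is a well-defined locally finite — hence, on $\mathbb{R}^{1,3}$, automatically regular — complex Borel measure. By the displayed identity it represents $\widehat{\mathcal{T}}_{(1,2)}^{T}$ on $\mathcal{D}(\mathbb{R}^{1,3})$, and since a tempered distribution is uniquely determined by its restriction to $\mathcal{D}(\mathbb{R}^{1,3})$, this is precisely the assertion that $\widehat{\mathcal{T}}_{(1,2)}^{T}$ defines the measure $\nu$. To round things off one may add that $\nu$ is in fact finite: either by appealing directly to Proposition~\ref{paper4_prop1}, or by estimating $\int_{\mathbb{R}^{1,3}}|g|\,d|\mu|$, which converges because $\mu$ — being simultaneously a measure and a tempered distribution — has total variation of at most polynomial growth in $\|p\|$ (a \emph{tempered measure}), while $g\in\mathcal{S}(\mathbb{R}^{1,3})$ decays faster than any inverse power of $\|p\|$; summing the contributions of the shells $\{n\le\|p\|<n+1\}$ then yields a finite total variation for $\nu$, so that $\widehat{\mathcal{T}}_{(1,2)}^{T}(f)=\int f\,d\nu$ extends by continuity from $\mathcal{D}(\mathbb{R}^{1,3})$ to all of $\mathcal{S}(\mathbb{R}^{1,3})$.

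An essentially equivalent route, staying closer to the machinery already set up, is to feed the estimate $|\widehat{\mathcal{T}}_{(1,2)}^{T}(f)|=\big|\int gf\,d\mu\big|\le\big(\int|g|\,d|\mu|\big)\,\|f\|_{\infty}$ into the Hahn--Banach/extension argument from the proof of Theorem~\ref{N_measure}: that $\|\cdot\|_{\infty}$-bound, valid for $f\in\mathcal{D}(\mathbb{R}^{1,3})$, automatically extends to all $f\in\mathcal{S}(\mathbb{R}^{1,3})$, whence $N=0$ by Theorem~\ref{bros_epstein_glaser} and $\widehat{\mathcal{T}}_{(1,2)}^{T}$ defines a measure by Theorem~\ref{N_measure} (or Corollary~\ref{N_corr}). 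Either way, the only genuinely non-formal ingredient — and the step I expect to need the most care — is the polynomial (tempered-measure) growth bound on $|\mu|$: in the physically relevant case where $\widehat{T}_{(1,2)}^{T}$ is represented by a positive spectral measure this is the classical fact that a positive measure which is a tempered distribution integrates some inverse power of $(1+\|p\|)$, and for signed $\mu$ one applies the same reasoning to a local Jordan decomposition. Everything else is bookkeeping with Schwartz seminorms, and the finiteness of $\nu$ is in any case forced by Proposition~\ref{paper4_prop1}.
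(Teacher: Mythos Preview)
Your argument is correct. The core of your first approach --- identifying the representing measure explicitly as $\nu=g\,d\mu$ and observing it is locally finite because $g$ is bounded and $\mu$ is locally finite --- is the measure-theoretic dual of the paper's proof, which stays on the functional side: it extends $\widehat{\mathcal{T}}_{(1,2)}^{T}$ directly to $C_c^{0}(\mathbb{R}^{1,3})$ (since $gf\in C_c^{0}$ whenever $f\in C_c^{0}$ and $g\in\mathcal{S}$) and then verifies the local sup-norm bound
\[
|\widehat{\mathcal{T}}_{(1,2)}^{T}(f)|=|\widehat{T}_{(1,2)}^{T}(gf)|\le C_K\sup_{K}|gf|\le \bigl(C_K\sup_{K}|g|\bigr)\,\|f\|_{\infty}
\]
for $f$ supported in a compact $K$. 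The two arguments are equivalent via Riesz; the paper's version is a touch more economical because it never names $\mu$ or $\nu$ and, by working on $C_c^{0}$ from the start, avoids your density/extension step from $\mathcal{D}(\mathbb{R}^{1,3})$. Your second route, via a \emph{global} $\|\cdot\|_{\infty}$ bound obtained from the polynomial growth of a tempered measure, is genuinely different: it buys you finiteness of $\nu$ (and hence Proposition~\ref{paper4_prop1}) as a byproduct, but at the cost of importing that structural fact about tempered measures, which the purely local argument does not need for the proposition as stated.
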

\begin{proof}
$\widehat{\mathcal{T}}_{(1,2)}^{T}$ is a (linear) functional on $C_{c}^{0}(\mathbb{R}^{1,3})$ because: $\widehat{\mathcal{T}}_{(1,2)}^{T}(f) = g\widehat{T}_{(1,2)}^{T}(f) = \widehat{T}_{(1,2)}^{T}(gf)$ and $\widehat{T}_{(1,2)}^{T}$ is always well-defined on $gf$ since $gf \in C_{c}^{0}(\mathbb{R}^{1,3})$ for $g \in \mathcal{S}(\mathbb{R}^{1,3})$ and $f \in C_{c}^{0}(\mathbb{R}^{1,3})$. Moreover, since $g$ is bounded and $\widehat{T}_{(1,2)}^{T}$ defines a measure, it follows that for every compact set $K \subset \mathbb{R}^{1,3}$: 
\begin{align*}
|\widehat{\mathcal{T}}_{(1,2)}^{T}(f)| = |\widehat{T}_{(1,2)}^{T}(gf)| \leq C_{K} \sup_{p \in K} |g(p)f(p)| \leq C_{K} \sup_{p \in K}|g(p)| \cdot \sup_{p \in K}|f(p)| \leq \widetilde{C}_{K}\sup_{p \in K}|f(p)|
\end{align*}  
holds for all $f$ with support in $K$, and so $\widehat{\mathcal{T}}_{(1,2)}^{T}$ defines a measure.   
\end{proof}

\noindent
It is also clearly important to understand to what extent Proposition~\ref{paper4_prop2} holds in the opposite direction, or equivalently: what conditions must $\widehat{T}_{(1,2)}^{T}$ satisfy in order to ensure that $\widehat{\mathcal{T}}_{(1,2)}^{T}$ does not define a measure? One example of such a condition is summarised by the following proposition:
\ \\
\begin{prop}
Let $\sigma$ be a (tempered) distribution with discrete support, and $D \sigma$ the distributional derivative of $\sigma$. 
\begin{align*}
\text{$\widehat{T}_{(1,2)}^{T} = D \sigma$ \hspace{2mm} $\Longrightarrow$ \hspace{2mm} $\widehat{\mathcal{T}}_{(1,2)}^{T}$ does not define a measure}
\end{align*}
\label{paper4_prop3}
\end{prop}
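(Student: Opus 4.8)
\noindent
The plan is to show that $\widehat{\mathcal{T}}_{(1,2)}^{T}$ fails the order-zero (sup-norm) bound: I would exhibit a fixed compact set $K$ and test functions $f_{n} \in \mathcal{D}(\mathbb{R}^{1,3})$ with $\text{supp}\,f_{n} \subset K$, $\|f_{n}\|_{\infty} \le 1$, and $|\widehat{\mathcal{T}}_{(1,2)}^{T}(f_{n})| \to \infty$; by the characterisation of measures used in the proofs of Theorem~\ref{N_measure} and Proposition~\ref{paper4_prop2}, this is precisely the statement that $\widehat{\mathcal{T}}_{(1,2)}^{T}$ does not define a measure. The starting point is Eq.~(\ref{smear_p}), i.e.\ $\widehat{\mathcal{T}}_{(1,2)}^{T}(f) = \widehat{T}_{(1,2)}^{T}(gf) = (D\sigma)(gf)$ with $g \in \mathcal{S}(\mathbb{R}^{1,3})$. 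One may assume $\widehat{T}_{(1,2)}^{T}=D\sigma \ne 0$ (otherwise it is a measure and there is nothing to prove), so $\text{supp}(D\sigma)\subseteq\text{supp}\,\sigma$ is a non-empty discrete set; the argument is then local, and I would fix one point $a \in \text{supp}(D\sigma)$ and take all $f_{n}$ supported in a ball $B$ around $a$ so small that $B \cap \text{supp}\,\sigma = \{a\}$.

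The main ingredient is the local structure of point-supported distributions: on $B$ one has $\sigma = \sum_{|\gamma| \le m} c_{\gamma}\, D^{\gamma}\delta_{a}$ with some $c_{\gamma} \ne 0$ for $|\gamma| = m$, and since the $D^{\gamma}\delta_{a}$ are linearly independent a non-trivial derivative strictly raises the order; hence on $B$, $\widehat{T}_{(1,2)}^{T} = D\sigma = \sum_{|\alpha| \le M} d_{\alpha}\, D^{\alpha}\delta_{a}$ with $M \ge 1$ and some $d_{\alpha} \ne 0$ for $|\alpha| = M$. For $f$ supported in $B$ this gives $\widehat{\mathcal{T}}_{(1,2)}^{T}(f) = \sum_{|\alpha| \le M} (-1)^{|\alpha|} d_{\alpha}\,(D^{\alpha}(gf))(a)$, and by the Leibniz rule the coefficient of the top-order term $(D^{\alpha}f)(a)$ (for each $|\alpha|=M$) is $(-1)^{M} d_{\alpha}\, g(a)$. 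I would then fix $\chi \in \mathcal{D}(\mathbb{R}^{1,3})$ supported in the unit ball with $\|\chi\|_{\infty}=1$ and $\sum_{|\alpha|=M}(-1)^{M} d_{\alpha}\,(D^{\alpha}\chi)(0) \ne 0$ (possible since the $d_{\alpha}$ do not all vanish and the $D^{\alpha}\delta_{0}$ are independent), and set $f_{n}(p)=\chi\big(n(p-a)\big)$; then $\text{supp}\,f_{n}\subset B$ for large $n$, $\|f_{n}\|_{\infty}=1$, and using $(D^{\gamma}f_{n})(a)=n^{|\gamma|}(D^{\gamma}\chi)(0)$ together with the Leibniz expansion,
\begin{align*}
\widehat{\mathcal{T}}_{(1,2)}^{T}(f_{n}) = g(a)\, n^{M}\sum_{|\alpha|=M}(-1)^{M} d_{\alpha}\,(D^{\alpha}\chi)(0) + O(n^{M-1}) ,
\end{align*}
so $|\widehat{\mathcal{T}}_{(1,2)}^{T}(f_{n})| \to \infty$ provided $g(a) \ne 0$, giving the required contradiction.

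The one step that genuinely needs care --- and the main obstacle --- is ensuring that $g(a) \ne 0$ for at least one $a \in \text{supp}(D\sigma)$: were $g$ to vanish at \emph{every} point of $\text{supp}(D\sigma)$, multiplication by the smooth factor $g$ could lower the order and the conclusion might fail. Here I would use the freedom in the cluster test functions $f^{(i)}$: by the Paley--Wiener theorem each $\hat{f}^{(i)}$ is the restriction of an entire function, so the zero set of $g(p)=\hat{f}^{(1)}(-p)\hat{f}^{(2)}(p)$ is a proper analytic subvariety, whereas $\text{supp}(D\sigma)\subseteq\text{supp}\,\sigma$ is discrete and hence countable; one can therefore choose the $f^{(i)}$ so that $g$ does not vanish on $\text{supp}(D\sigma)$ --- and in the physically relevant cases, where $\text{supp}\,\sigma$ is a finite set (for instance concentrated at the origin), this is just finitely many non-vanishing conditions and is immediate. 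With such a choice in place the localisation argument above applies, and everything else reduces to routine bookkeeping with orders of point-supported distributions.
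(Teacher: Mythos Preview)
Your argument is correct and is in fact more careful than the paper's own proof, though the two take different routes. The paper argues directly that $\widehat{\mathcal{T}}_{(1,2)}^{T}$ cannot be a functional on $C_{c}^{0}(\mathbb{R}^{1,3})$: it writes $\sigma=\sum_{p\in\mathbb{S}}\sum_{\alpha}a_{\alpha}D^{\alpha}\delta_{p}$, formally computes $\widehat{\mathcal{T}}_{(1,2)}^{T}(f)=(D\sigma)(gf)$ via integration by parts and the Leibniz rule, observes that the resulting expression contains $D(f)(p)$ for $p\in\mathbb{S}$, and concludes that this is ill-defined for merely continuous $f$. Your approach instead stays entirely within $\mathcal{D}(\mathbb{R}^{1,3})$ and disproves the order-zero bound by the standard scaling construction $f_{n}(p)=\chi(n(p-a))$, showing $|\widehat{\mathcal{T}}_{(1,2)}^{T}(f_{n})|\sim n^{M}$ with $M\ge 1$; by the equivalence established in Theorem~\ref{N_measure} this is the same conclusion. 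What your route buys is rigour (one never needs to ``evaluate'' a distribution on a non-test function) and, more importantly, it forces the issue of whether $g(a)\ne 0$: the paper's line ``the value of $\widehat{\mathcal{T}}_{(1,2)}^{T}(f)$ always depends on the derivative $D(f)$'' tacitly assumes exactly this, since if $g$ vanished on $\mathbb{S}$ the top-order term in $f$ could cancel and $g\,D\sigma$ could genuinely drop to a measure (e.g.\ $g\delta_{0}'=-g'(0)\delta_{0}$ when $g(0)=0$). Your Paley--Wiener argument that the zero set of $g=\hat f^{(1)}(-\,\cdot\,)\hat f^{(2)}$ is a proper analytic variety, hence can be arranged to miss the discrete set $\text{supp}\,\sigma$, is the right way to close this gap, and it makes explicit that the statement should be read for generic (or suitably chosen) cluster test functions $f^{(i)}$. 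One minor wording point: when $D\sigma=0$ the conclusion of the proposition is literally false rather than vacuous, so it is better to phrase this as an implicit nondegeneracy hypothesis than as ``nothing to prove''.
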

\begin{proof}
In general, if $\sigma$ has discrete support $\mathbb{S} \subset \mathbb{R}^{1,3}$, then $\sigma = \sum_{p \in\mathbb{S}}\sum_{\alpha}a_{\alpha} D^{\alpha} \delta_{p}$, where both sums are finite~\cite{Strichartz94}. Assuming $f \in C_{c}^{0}(\mathbb{R}^{1,3})$, one has: 
\begin{align*}
\widehat{\mathcal{T}}_{(1,2)}^{T}(f) = g\widehat{T}_{(1,2)}^{T}(f) = \widehat{T}_{(1,2)}^{T}(gf) = D \sigma (gf) &= -\sigma (D(gf)) \\
&= -\sum_{p \in\mathbb{S}}\sum_{\alpha}a_{\alpha} D^{\alpha} \delta_{p}(D(gf)) \\
&= (-1)^{\alpha +1}\sum_{p \in\mathbb{S}}\sum_{\alpha}a_{\alpha} \delta_{p}\left[D^{\alpha}(fD(g) + gD(f))\right] 
\end{align*}
So the value of $\widehat{\mathcal{T}}_{(1,2)}^{T}(f)$ always depends on the derivative $D(f)$ evaluated at the points $p \in\mathbb{S}$. However, if $f$ is not differentiable at some $p \in\mathbb{S}$, then $D(f)(p)$ is ill-defined, which proves that $\widehat{\mathcal{T}}_{(1,2)}^{T}$ cannot be a functional on $C_{c}^{0}(\mathbb{R}^{1,3})$, and therefore does not define a measure.         
\end{proof}
\ \\

\noindent
Together, Propositions~\ref{paper4_prop2} and~\ref{paper4_prop3} describe how the properties of $\widehat{T}_{(1,2)}^{T}(p)$ can be used to establish whether $\widehat{\mathcal{T}}_{(1,2)}^{T}(p)$ defines a measure, and thus by Corollary~\ref{N_corr} whether $N$ is vanishing or not. Since $\widehat{T}_{(1,2)}^{T}(p)$ is constructed purely in terms of the basic fields $\phi_{k}$, the structure of $\widehat{T}_{(1,2)}^{T}(p)$ is constrained by Lorentz symmetry, which implies that $\widehat{T}_{(1,2)}^{T}(p)$ can be written in a general form. This form will be outlined in the next section, and its connection to the measure properties of $\widehat{T}_{(1,2)}^{T}(p)$ will be described.

\section{The spectral structure of QFT correlators}
\label{paper4_section2}

\subsection{The spectral representation}
\label{paper4_section2_1}

In axiomatic formulations of QFT, both $\widehat{T}_{(1,2)}^{T}(p)$ and $\widehat{T}_{(1,2)}(p)$ are \textit{Lorentz covariant} tempered distributions, and therefore satisfy the following condition~\cite{Bogolubov_Logunov_Oksak90}: 
\begin{align}
\widehat{T}(\Lambda p) = S(\Lambda) \, \widehat{T}(p), \hspace{10mm} \Lambda \in \overbar{\mathscr{L}_{+}^{\uparrow}} \cong \mathrm{SL}(2,\mathbb{C})
\end{align}     
where $\overbar{\mathscr{L}_{+}^{\uparrow}}$ is the universal cover of the identity component of the Lorentz group\footnote{This group consists of Lorentz transformations that preserve both orientation and the direction of time.} $\mathscr{L}_{+}^{\uparrow}$, and $S$ is some (finite-dimensional) representation of $\mathrm{SL}(2,\mathbb{C})$. In the special case where the representation is trivial ($S(\Lambda)\equiv 1$), $\widehat{T}(p)$ is called a \textit{Lorentz invariant} distribution. The structure of the Lorentz covariant distribution $\widehat{T}(p)$ is dependent upon how the fields $\phi_{1}$ and $\phi_{2}$ transform under Lorentz transformations. In particular, $\widehat{T}(p)$ has the following decomposition~\cite{Bogolubov_Logunov_Oksak90}:
\begin{align}
\widehat{T}(p) = \sum_{\alpha=1}^{\mathscr{N}}Q_{\alpha}(p) \, \widehat{T}_{\alpha}(p)
\label{cov_decomp}
\end{align}    
where $\widehat{T}_{\alpha}(p)$ are Lorentz invariant distributions, and $Q_{\alpha}(p)$ are Lorentz covariant polynomial functions of $p$ which carry the Lorentz index structure of $\phi_{1}$ and $\phi_{2}$. The simplest case is the Fourier transform of a correlator involving two scalar fields. Since both the fields are scalar it follows that $Q_{1}(p) = 1$, and therefore $\widehat{T}(p) = \widehat{T}_{1}(p)$. As an example, consider $\widehat{D}(p) = \mathcal{F}\left[\langle 0| \phi(x)\phi(y)|0\rangle \right]$, where $\phi$ is a free scalar field of mass $m$. In this case $\widehat{D}(p)$ has the explicit form:
\begin{align}
\widehat{D}(p) =  a\, \delta(p) + 2\pi \theta(p^{0})\delta(p^{2} - m^{2}) 
\label{KL_rep_scalar} 
\end{align}  
where $a = |\langle 0|\phi |0\rangle|^{2}$, and hence: $\widehat{D}^{T}(p)= 2\pi \theta(p^{0})\delta(p^{2} - m^{2})$. The truncation of the scalar correlator therefore removes the component of $\widehat{D}(p)$ concentrated at $p=0$. As expected from Eq.~(\ref{cov_decomp}), the overall structure of $\widehat{D}(p) = \widehat{D}_{1}(p)$ (and also $\widehat{D}^{T}(p)$) is Lorentz invariant. \\

\noindent
Now consider the case where $\widehat{S}(p)$ is the Fourier transform of a correlator that involves a Dirac spinor and conjugate spinor field. Here there exist two possible Lorentz covariant polynomials: $Q_{1}(p) = \mathbb{I}$ and $Q_{2}(p) = \gamma^{\mu}p_{\mu} = \slashed{p}$, and hence:
\begin{align}
\widehat{S}(p) = \mathbb{I} \, \widehat{S}_{1}(p) + \slashed{p} \, \widehat{S}_{2}(p)
\label{ferm_decomp}
\end{align}     
where the spinor indices have been suppressed. In the case where $\widehat{S}(p) = \mathcal{F}\left[\langle 0| \psi(x)\overline{\psi}(y)|0\rangle \right]$, and $\psi$ is a free Dirac field of mass $m$, $\widehat{S}(p)$ has the explicit form:
\begin{align}
\widehat{S}(p) &=  2\pi (\slashed{p} + m) \theta(p^{0})\delta(p^{2} - m^{2}) 
\label{KL_rep_ferm} 
\end{align} 
Comparing this expression with Eq.~(\ref{ferm_decomp}) it follows that $\widehat{S}_{1}(p) = 2\pi m \, \theta(p^{0})\delta(p^{2} - m^{2})$, and $\widehat{S}_{2}(p) = 2\pi \theta(p^{0})\delta(p^{2} - m^{2})$, which again as anticipated are both Lorentz invariant. Finally, consider a Fourier transformed correlator $\widehat{D}_{\mu\nu}(p)$ that involves two arbitrary vector fields with indices $\mu$ and $\nu$ respectively. In this case there are also two possible Lorentz covariant polynomials: $Q_{1}(p) = g_{\mu\nu}$ and $Q_{2}(p) = p_{\mu}p_{\nu}$, which implies:
\begin{align}
\widehat{D}_{\mu\nu}(p) = g_{\mu\nu} \, \widehat{D}_{1}(p) + p_{\mu}p_{\nu} \, \widehat{D}_{2}(p)
\label{vector_decomp}
\end{align}  
An example of this class of Lorentz covariant distributions is: $\widehat{D}_{\mu\nu}(p) = \mathcal{F}\left[\langle 0| A_{\mu}(x)A_{\nu}(y)|0\rangle \right]$, where $A_{\mu}$ is a free (abelian) gauge field. One can demonstrate that $\widehat{D}_{\mu\nu}(p)$ has the form~\cite{Bogolubov_Logunov_Oksak90}: 
\begin{align}
\widehat{D}_{\mu\nu}(p) = 2\pi \theta(p^{0})\left[ -g_{\mu\nu}\delta(p^{2}) + (\xi-1)\delta'(p^{2})p_{\mu}p_{\nu} \right]
\label{KL_rep_vec}
\end{align} 
where $\xi$ is the gauge-fixing parameter which is introduced in order to locally quantise the theory. Comparing this expression with Eq.~(\ref{vector_decomp}), $\widehat{D}_{1}(p) = -2\pi \theta(p^{0})\delta(p^{2})$ and $\widehat{D}_{2}(p) =  2\pi \theta(p^{0})(\xi-1)\delta'(p^{2})$, which are both Lorentz invariant. \\   

\noindent
Now that the general structure of Lorentz covariant (Fourier transformed) correlators in QFTs has been outlined, the results of Sec.~\ref{paper4_section1} can be applied to some explicit physical examples. Firstly, consider the free quantised electromagnetic field $A_{\mu}$. The Fourier transform of the correlator of this field is given by Eq.~(\ref{KL_rep_vec}). The non-gauge-fixing component of $\widehat{D}_{\mu\nu}(p)$ clearly defines a measure, whereas for general values of $\xi$, the gauge-fixing component contains a $\delta'(p^{2})$ term, which does not define a measure. However, since the physical characteristics of gauge theories are independent of the value of the $\xi$, one is free to set $\xi=1$, and then $\widehat{D}_{\mu\nu}(p) = -2\pi g_{\mu\nu}\theta(p^{0})\delta(p^{2})$. Since\footnote{The equality $\langle 0 | A_{\mu}(x)A_{\nu}(y)|0\rangle^{T} = \langle 0 | A_{\mu}(x)A_{\nu}(y)|0\rangle$ follows from Lorentz invariance~\cite{Strocchi13}.} $\langle 0 | A_{\mu}(x)A_{\nu}(y)|0\rangle^{T} = \langle 0 | A_{\mu}(x)A_{\nu}(y)|0\rangle$, and $\widehat{D}_{\mu\nu}^{T}(p)=\widehat{D}_{\mu\nu}(p)$ defines a measure, it follows from Proposition~\ref{paper4_prop2} and Corollary~\ref{N_corr} that the CDP is preserved ($N=0$). Due to Theorem~\ref{cdt}, physically this implies that the correlation strength $F_{\text{free}}^{\gamma\gamma}(r)$ between clusters containing free photons has the asymptotic behaviour:
\begin{align}
F_{\text{free}}^{\gamma\gamma}(r)\sim \frac{1}{r^{2}}, \hspace{5mm} r \rightarrow \infty 
\label{photon_corr} 
\end{align}
and hence the correlation between free photons becomes increasingly weaker the further away they are separated. \\

\noindent
In QCD, the correlators of interest involve quark and gluon fields. Unlike the photon field, the non-abelian gluon field $A_{\mu}^{a}$ can never be free because there are always (self) interactions, even with the absence of quark fields. It is possible though to consider free quarks, and in fact the Fourier transformed free quark correlator $\widehat{S}(p)$ has the form of Eq.~(\ref{KL_rep_ferm}), where now $m$ is the mass of the quark. Because\footnote{$\langle 0| \psi(x)\overline{\psi}(y)|0\rangle^{T}=\langle 0| \psi(x)\overline{\psi}(y)|0\rangle$ also because of Lorentz invariance.} $\langle 0| \psi(x)\overline{\psi}(y)|0\rangle^{T}=\langle 0| \psi(x)\overline{\psi}(y)|0\rangle$, and $\widehat{S}^{T}(p)=\widehat{S}(p)$ defines a measure, Proposition~\ref{paper4_prop2} and Corollary~\ref{N_corr} imply that $N=0$. Since the theory contains a mass gap $(0,m)$, it follows from Theorem~\ref{cdt} that the correlation strength $F_{\text{free}}^{q\bar{q}}(r)$ between free quarks behaves asymptotically as:
\begin{align}
F_{\text{free}}^{q\bar{q}}(r)\sim \frac{e^{-mr}}{r^{\frac{3}{2}}}, \hspace{5mm} r \rightarrow \infty  
\end{align} 
and hence free quarks preserve the CDP. Physically, this means that the correlation between the quarks is exponentially suppressed at large distances, which supports the hypothesis that free quarks are not confined, as one would expect. \\

\noindent
Although the examples considered so far have consisted of exactly solvable free theories, the results derived in Sec.~\ref{paper4_section1} are also equally applicable to interacting QFTs. However, in general the analytic structure of correlators is poorly understood for interacting theories, and so it is more difficult to prove directly whether the CDP is preserved or not. Quantum electrodynamics (QED) is one example of an interacting theory though, where it is possible to determine this property~\cite{Nakanishi_Ojima90}. In QED, $F_{\mu\nu}$ is an observable, which implies that $\langle 0| F_{\mu\nu}(x)F_{\rho\sigma}(y) |0\rangle$ is a positive-definite distribution\footnote{A positive-definite distribution $T(x-y)$ satisfies the condition: $\int d^{4}xd^{4}y \, T(x-y) \overbar{f(x)}f(y) \geq 0$ for any test function $f$.}~\cite{Nakanishi_Ojima90}. Positive-definiteness is sufficient to prove that $\mathcal{F}\left[\langle 0| A_{\mu}(x)A_{\nu}(y) |0\rangle^{T} \right]$ is non-negative, defines a measure~\cite{Bogolubov_Logunov_Oksak90}, and thus by Proposition~\ref{paper4_prop2} $N=0$, which implies that interacting photons also preserve the CDP. In QCD though, this same argument fails because $F_{\mu\nu}^{a}$ is \textit{not} an observable, which itself is a consequence of the non-abelian nature of the theory. The failure of this proof is certainly suggestive that the CDP may no longer hold for the gluon cluster correlator in QCD. However, as emphasised by~\cite{Roberts_Williams_Krein91}, the failure of this proof is not same as a proof of its failure, and it remains an open question as to whether the gluon correlator, or more generally cluster correlators of coloured fields, violate the CDP. The problem in QCD is that the precise form of correlators involving coloured fields is controlled by the non-perturbative dynamics of the theory, which is \textit{a priori} unknown. Nevertheless, the general non-perturbative structure of these correlators can still be characterised in a general manner, and this will be discussed in the following section.

\subsection{The spectral density}
\label{paper4_section2_2}

Due to Eq.~(\ref{cov_decomp}), it is clear that in order to characterise Fourier transformed correlators, it is important to understand the general structure of Lorentz invariant distributions $\widehat{T}_{\alpha}$. As discussed in Sec.~\ref{paper4_section1}, these distributions are also required to have support in the closed forward light cone $\overbar{V}^{+}$. If a tempered distribution $\widehat{T}_{\alpha} \in \mathcal{S}'(\mathbb{R}^{1,3})$ is both Lorentz invariant, and has support in $\overbar{V}^{+}$, it turns out that $\widehat{T}_{\alpha}$ can be written in the following general manner~\cite{Bogolubov_Logunov_Oksak90}:   
\begin{align}
\widehat{T}_{\alpha}(p) =  P(\partial^{2})\delta(p) + \int_{0}^{\infty} ds \, \theta(p^{0})\delta(p^{2}-s) \rho_{\alpha}(s) 
\label{KL_gen_rep} 
\end{align}   
where $P(\partial^{2})$ is some polynomial in the d'Alembert operator $\partial^{2} = g_{\mu\nu}\frac{\partial}{\partial p_{\mu}}\frac{\partial}{\partial p_{\nu}}$, and $\rho_{\alpha}(s) \in \mathcal{S}'(\bar{\mathbb{R}}_{+})$. This important structural relation is called the \textit{spectral representation} of $\widehat{T}_{\alpha}$, and $\rho_{\alpha}$ is referred to as the \textit{spectral density}. If $\widehat{T}_{\alpha}$ is also a non-negative distribution\footnote{A non-negative distribution $\widehat{T}_{\alpha}$ has the property that $\widehat{T}_{\alpha}(f) \geq 0$, $\forall f \geq 0$.}, it follows that $\widehat{T}_{\alpha}(f) = \int f(p) \, d\mu(p)$, where $d\mu(p)$ is a non-negative (tempered) measure. In this case $\widehat{T}_{\alpha}(p)$ has the form:
\begin{align}
\widehat{T}_{\alpha}(p) =  c\, \delta(p) + \int_{0}^{\infty} ds \, \theta(p^{0})\delta(p^{2}-s) \rho_{\alpha}(s) 
\label{KL_rep} 
\end{align} 
where $c \geq 0$, and $\rho_{\alpha}(s)$ defines a (tempered) measure $d\rho_{\alpha}(s) = \rho_{\alpha}(s)ds$. Eq.~(\ref{KL_rep}) is called the \textit{K\"all\'{e}n-Lehmann representation}. In fact, independently of whether $\widehat{T}_{\alpha}$ is non-negative or not, if $\widehat{T}_{\alpha}$ defines a measure then this is sufficient to imply that $\widehat{T}_{\alpha}$ must have the form of Eq.~(\ref{KL_rep}), but $c$ is not necessarily positive in this case. \\

\noindent
The spectral representation of $\widehat{T}_{\alpha}$ emphasises that the $\mathscr{N}$ spectral densities $\rho_{\alpha}$ associated with a correlator play a fundamental role in determining its structure. In fact, when $\widehat{T}_{\alpha}$ is a measure, and Eq.~(\ref{KL_rep}) holds, correlators are uniquely specified by $\{ \rho_{\alpha} \}$. Comparing the Fourier transformed free correlator components $\widehat{S}_{i}(p)$ and $\widehat{D}_{i}(p)$ discussed in Sec.~\ref{paper4_section2_1} with Eq.~(\ref{KL_gen_rep}), one can directly see that the spectral densities for the Dirac correlator are:  
\begin{align}
\rho_{1}^{\psi}(s)= 2\pi m \, \delta(s-m^{2}), \hspace{5mm} \rho_{2}^{\psi}(s)= 2\pi \delta(s-m^{2}) 
\label{fermion_spec} 
\end{align}
and for the vector field correlator are:
\begin{align}
\rho_{1}^{A}(s) = -2\pi \delta(s), \hspace{5mm} \rho_{2}^{A}(s) = 2\pi (\xi-1) \delta'(s)  
\label{vector_spec}
\end{align}
In Eq.~(\ref{fermion_spec}), both of the spectral densities define measures. This is not surprising because $\widehat{S}(p)$ itself defines a measure, and therefore both $\widehat{S}_{1}(p)$ and $\widehat{S}_{2}(p)$ must satisfy Eq.~(\ref{KL_rep}). In Eq.~(\ref{vector_spec}) though, $\rho_{1}^{A}$ defines a measure but $\rho_{2}^{A}$ clearly does not, unless $\xi=1$. That $\rho_{2}^{A}$ is permitted to not define a measure is a symptom of the fact that much like QCD, a free quantised abelian gauge theory is a locally quantised QFT. As discussed in Sec.~\ref{paper4_intro}, this means that the space of states $\mathcal{V}$ no longer has a positive-definite inner product. This implies, among other things, that the Fourier transformed correlators are not guaranteed to be non-negative, and so may not define measures and satisfy Eq.~(\ref{KL_rep})~\cite{Bogolubov_Logunov_Oksak90}. \\

\noindent
It is clear from the results in this section that the behaviour of correlators is closely connected to the structure of the associated spectral densities $\{ \rho_{\alpha} \}$. In light of Eqs.~(\ref{cov_decomp}) and~(\ref{KL_rep}) it is clear that if any of the spectral densities $\rho_{\alpha}$ does not define a measure, then this is sufficient to imply that $\widehat{T}$ must also not define a measure. However, as already mentioned in Sec.~\ref{paper4_section1}, the failure of $\widehat{T}_{(1,2)}^{T}$ to define a measure may not imply that $\widehat{\mathcal{T}}_{(1,2)}^{T}$ does not define a measure, and that $N \neq 0$. But by Proposition~\ref{paper4_prop3}, if $\widehat{T}_{(1,2)}^{T}=D\sigma$ where $\sigma$ is a distribution with discrete support, then this \textit{is} sufficient to imply that $N \neq 0$. In fact, if any of the components in the Lorentz covariant decomposition of $\widehat{T}_{(1,2)}^{T}$ contains a term of the form $D\sigma$, then this implies that $N \neq 0$. The simplest such example is if $\widehat{T}_{(1,2)}^{T}(p)= \theta(p^{0})\delta'(p^{2}-a)$, and hence by Eq.~(\ref{KL_gen_rep}), $\rho(s) = \delta_{a}'(s) =\delta'(s-a)$, as is the case for $\rho_{2}^{A}$ in Eq.~(\ref{vector_spec}) (with $\xi \neq 1$ and $a=0$). Therefore, if the spectral densities of any (truncated) correlator contain a $\delta_{a}'$ term, this is sufficient to imply that $N\neq 0$. This result is particularly interesting when applied to QCD because it provides a definite condition with which to test whether truncated cluster correlators of coloured fields violate the CDP. Both this condition and its applications will be discussed in the next section.

\section{The cluster decomposition property in QCD}
\label{paper4_section3}
 
As outlined at the end of the Sec.~\ref{paper4_section2_2}, determining the structure of the spectral densities of correlators that involve coloured fields in QCD, such as the quark and gluon correlators, is a direct way to establish whether $N=0$ or not. In particular, if any of the spectral densities contain a $\delta_{a}'$ term, this is sufficient to imply that $N \neq 0$. Combining this result with Theorem~\ref{cdt}, one has the following corollary:
\ \\
\begin{corr}
Assuming that the state space $\mathcal{V}_{\text{QCD}}$ has no mass gap, and that any of the spectral densities $\{\rho_{\alpha}\}$ of correlators involving coloured fields contains a $\delta_{a}'$ term, this is sufficient to ensure confinement. 
\label{qcd_spectral}  
\end{corr}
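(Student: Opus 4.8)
The plan is to assemble the spectral‑density criterion established at the end of Sec.~\ref{paper4_section2_2} with the no‑mass‑gap branch of Theorem~\ref{cdt}, and then to read off confinement from the physical interpretation of a violated CDP recalled in Sec.~\ref{paper4_intro}. The corollary is in this sense an assembly step: the substantive work has already been front‑loaded into Propositions~\ref{paper4_prop2}--\ref{paper4_prop3}, Corollary~\ref{N_corr}, and the spectral representation~(\ref{KL_gen_rep}), so the proof mainly has to chain these correctly and track the role of each hypothesis.

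First I would show $N\neq 0$. By assumption some spectral density $\rho_{\alpha}$ occurring in the Lorentz‑covariant decomposition~(\ref{cov_decomp}) of a truncated correlator $\widehat{T}_{(1,2)}^{T}$ of coloured fields contains a $\delta_{a}'$ term; by~(\ref{KL_gen_rep}) the corresponding invariant component $\widehat{T}_{\alpha}(p)$ then carries a term $\theta(p^{0})\delta'(p^{2}-a)$, i.e. a term of the form $D\sigma$ with $\sigma$ supported on $\{p^{2}=a\}$ (discrete support in the case $a=0$). Using $\widehat{\mathcal{T}}_{(1,2)}^{T}(f)=g\,\widehat{T}_{(1,2)}^{T}(f)=\widehat{T}_{(1,2)}^{T}(gf)$ from~(\ref{smear_p}) together with the reasoning of Proposition~\ref{paper4_prop3}, the value of $\widehat{\mathcal{T}}_{(1,2)}^{T}(f)$ depends on derivatives of $f$ evaluated on $\{p^{2}=a\}$, so $\widehat{\mathcal{T}}_{(1,2)}^{T}$ fails to be a functional on $C_{c}^{0}(\mathbb{R}^{1,3})$ and does not define a measure; Corollary~\ref{N_corr} then gives $N\neq 0$, hence $N\geq 1$.

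Next I would feed $N\geq 1$ into Theorem~\ref{cdt}. Here the hypothesis that $\mathcal{V}_{\text{QCD}}$ has no mass gap is essential, since it selects the second branch, giving $\big|\langle 0|\mathcal{B}_{1}(x_{1})\mathcal{B}_{2}(x_{2})|0\rangle^{T}\big|\lesssim [\xi]^{2N-2}\big(1+|\xi_{0}|/[\xi]^{2}\big)$, so by~(\ref{F_corr}) the correlation strength behaves as $F^{\phi_{1}\phi_{2}}(r)\sim r^{2N-2}$ with $2N-2\geq 0$. Thus the correlations between the coloured clusters $\phi_{1}$ and $\phi_{2}$ are not damped as $r\rightarrow\infty$, i.e. the CDP is violated. (Were a mass gap $(0,M)$ present, the first branch would still force exponential suppression $\sim r^{2N-3/2}e^{-Mr}\to 0$ and the CDP would survive, which is why the no‑mass‑gap hypothesis cannot be dropped.) Finally, as reviewed in Sec.~\ref{paper4_intro}, the failure of the CDP for clusters of coloured fields means that the measurement of a state associated with one coloured field cannot be carried out independently of the other, so individual coloured states are not detectable --- the standard sufficient condition for confinement of~\cite{Nakanishi_Ojima90,Roberts_Williams_Krein91} --- and the corollary follows.

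The main obstacle I anticipate is making the first step airtight: one must rule out that the $\delta_{a}'$ contribution to a single covariant component gets cancelled --- against the polynomial prefactors $Q_{\alpha}(p)$ in~(\ref{cov_decomp}), against the other invariant components $\widehat{T}_{\beta}$, or against the fixed Schwartz smearing $g(p)$ --- so as to leave $\widehat{\mathcal{T}}_{(1,2)}^{T}$ a measure after all. Since $g$ and the $Q_{\alpha}$ are smooth and independent of the test function on which $\widehat{\mathcal{T}}_{(1,2)}^{T}$ acts, the obstruction (a genuine derivative evaluated on $\{p^{2}=a\}$) cannot be regularised or reproduced by any measure‑defining term, but this non‑cancellation must be argued rather than assumed. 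A secondary subtlety is that Theorem~\ref{cdt} is phrased as an upper bound, so strictly one also needs $N$ to be the \emph{order} of $\widehat{\mathcal{T}}_{(1,2)}^{T}$ (as stipulated after Theorem~\ref{bros_epstein_glaser}) in order to conclude that the $r^{2N-2}$ growth is genuinely attained, and hence that the CDP really fails rather than merely failing to be provable.
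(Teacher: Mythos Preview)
Your proposal is correct and follows essentially the same route as the paper: the corollary is not given a separate proof there, but is presented as the combination of the end of Sec.~\ref{paper4_section2_2} (a $\delta_{a}'$ term in some $\rho_{\alpha}$ forces a $\theta(p^{0})\delta'(p^{2}-a)$ contribution in $\widehat{T}_{(1,2)}^{T}$, hence $\widehat{\mathcal{T}}_{(1,2)}^{T}$ is not a measure and $N\neq 0$ by Corollary~\ref{N_corr}) with the no--mass--gap branch of Theorem~\ref{cdt} and the confinement interpretation from Sec.~\ref{paper4_intro}. One small correction: the set $\{p^{2}=a\}$ is not discrete for any $a\geq 0$ (including $a=0$), so Proposition~\ref{paper4_prop3} does not apply verbatim; you are right to invoke only its \emph{reasoning}, and the paper is equally informal on this point. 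Your closing caveats about possible cancellations among the $Q_{\alpha}\widehat{T}_{\alpha}$ and about $N$ being the order rather than merely an upper bound are genuine technical gaps that the paper also leaves unaddressed.
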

\ \\
\noindent
It is important to emphasise here that the requirement of no mass gap refers to the full state space $\mathcal{V}$ (with an indefinite inner product), and not to the physical Hilbert space $\mathcal{H}$. In particular, this means that it is possible for $\mathcal{V}$ to have no mass gap but $\mathcal{H}$ to have one, which is expected to be the case in QCD~\cite{Nakanishi_Ojima90}. Since the precise analytic form of spectral densities (and hence correlators) are unknown in QCD, it is difficult to establish whether $\{\rho_{\alpha} \}$ contain non-measure defining terms or not. Nevertheless, by using non-perturbative methods such as lattice QFT it is possible to calculate approximations to these objects using numerical fits. An important expression in this regard is the so-called \textit{Schwinger function} $C_{\alpha}(t)$, which can be written in the following form\footnote{In general, the Schwinger function is defined by: $C_{\alpha}(t) = \frac{1}{2\pi}\int_{-\infty}^{\infty} dp_{0} \, e^{ip_{0}t}\Delta_{\alpha}(p^{2})|_{\mathbf{p}=0}$, where $\Delta_{\alpha}(p^{2})$ is one of the components of a Euclidean propagator. It should be noted that $C_{\alpha}(t)$ reduces to the form of Eq.~(\ref{schwinger}) only if one assumes that $\Delta_{\alpha}(p^{2})$ does not contain $P(\partial^{2})\delta(p)$ contributions as in Eq.~(\ref{KL_gen_rep}).}:
\begin{align}
C_{\alpha}(t) = \int_{0}^{\infty} ds \, \rho_{\alpha}(s) \frac{e^{-\sqrt{s}t}}{2\sqrt{s}}
\label{schwinger}
\end{align}         
where $\rho_{\alpha}$ is the spectral density associated with one of the components of a specific propagator\footnote{The spectral densities $\{\rho_{\alpha} \}$ which define correlators are the same as those that define propagators. Moreover, $\{\rho_{\alpha} \}$ have the same form for both Euclidean and Minkowski spacetime propagators.}, and $t \geq 0$. By calculating $C_{\alpha}(t)$ on the lattice, this provides a way of indirectly probing the structure of $\rho_{\alpha}$. $C_{\alpha}(t)$ has been computed for both the quark and gluon propagators, and one of the most striking features is that $C_{\alpha}(t)$ appears to become negative at some value of $t$~\cite{Alkofer_Detmold_Fischer_Maris04,Silva_Oliveira_Dudal_Bicudo_Cardoso13}. This feature implies that $\rho_{\alpha}$ violates non-negativity, which is usually interpreted as evidence of confinement~\cite{Alkofer_Detmold_Fischer_Maris04,Silva_Oliveira_Dudal_Bicudo_Cardoso13}. However, it is often assumed that $\rho_{\alpha}(s)$ makes sense as a function, and that the negativity of $C_{\alpha}(t)$ is due to $\rho_{\alpha}(s)$ becoming negative over some continuous range of $s$~\cite{Cornwall13}. In general though, because $\rho_{\alpha}$ is a distribution (in $\mathcal{S}'(\bar{\mathbb{R}}_{+})$), $\rho_{\alpha}$ could equally contain both regular and singular components, and so the negativity of $C_{\alpha}(t)$ is not necessarily caused by $\rho_{\alpha}(s)$ becoming negative as a continuous function. In fact, by inserting a singular term of the form $B\delta_{b}'$ into Eq.~(\ref{schwinger}) (with $b>0$ and $B<0$), it is clear that the appearance of such a term can cause $C_{\alpha}(t)$ to become continuously negative. Moreover, the shape characteristics of $C_{\alpha}(t)$ can also be replicated. Lattice calculations of both the quark and gluon propagators indicate that $C_{\alpha}(t)$ starts positive for small $t$, becomes negative at some specific value of $t$, and then generally flattens towards zero for large values of $t$. By employing a spectral density ansatz of the form: $\rho_{\alpha}(s)= A\delta(s-a) + B\delta'(s-b)$, where $A>0$ and $B<0$ ($a,b>0$ by definition), this qualitative behaviour is reproduced. This specific example demonstrates that the spectral density $\rho_{\alpha}$ can in fact be completely singular, and yet still reproduce the observed behaviour of $C_{\alpha}(t)$. In light of Corollary~\ref{qcd_spectral}, the characteristics of $C_{\alpha}(t)$ can therefore instead be interpreted as evidence of confinement caused by the failure of the CDP for quarks and gluons. \\

\noindent
In some cases, the spectral densities of propagators (and hence correlators) are explicitly computed using a combination of numerical and analytic approaches\footnote{See for example~\cite{Strauss_Fischer_Kellermann12} and~\cite{Dudal_Oliveira_Silva14}.}. This then means that the explicit form of $\widehat{\mathcal{T}}_{(1,2)}^{T}$ can be determined using Eq.~(\ref{KL_gen_rep}), and thus Corollary~\ref{N_corr} can actually be applied directly to determine whether $N=0$ or not. Of course, if one can demonstrate that $N \neq 0$, it is also interesting to establish the specific value that $N$ takes, especially in the case of the quark and gluon correlators where $N$ determines how the correlation strength $F(r)$ between the quark and gluon field clusters behaves as $r \rightarrow \infty$. However, as discussed in Sec.~\ref{paper4_section1}, if one constructs a bound on $\widehat{\mathcal{T}}_{(1,2)}^{T}$ (as in Theorem~\ref{bros_epstein_glaser}), one is not guaranteed that the largest value of $|\alpha|$ which appears in this bound is minimal. If the maximal value of $|\alpha|$ equals $m$ say, then one can only assert that the order of $\widehat{\mathcal{T}}_{(1,2)}^{T}$ satisfies: $N \leq m$. Nevertheless, this is still interesting because it implies an upper bound on the asymptotic behaviour of $F^{q\bar{q}}(r)$ and $F^{gg}(r)$. \\

\noindent
The discussions in this section demonstrate that confinement may occur in QCD because of a violation of the CDP caused by the appearance of non-measure defining $\delta_{a}'$ terms in the spectral densities of coloured correlators. In particular, the negativity of $C_{\alpha}(t)$ for quark and gluon propagators can be interpreted as evidence of such a violation, and therefore supports the hypothesis of quark and gluon confinement. Due to Theorem~\ref{cdt} and Corollary~\ref{N_corr}, whether a truncated cluster correlator violates the CDP or not depends on if its Fourier transform defines a measure, which itself is determined by the structure of the corresponding spectral densities $\{\rho_{\alpha} \}$. Although we will not pursue this further here, it would be interesting to establish whether the constraints imposed on $\{\rho_{\alpha} \}$ by the requirement to preserve or violate the CDP are consistent with the constraints due to other physical or structural relations, such as the operator product expansion~\cite{Lowdon15}.

\section{Conclusions}
\label{paper4_concl}

Determining the space-like asymptotic behaviour of truncated cluster correlators is crucial for understanding the large-distance correlation strength between the effective quanta associated with the field degrees of freedom in these correlators. For a QFT which is locally quantised, the cluster decomposition theorem implies that it is possible for a truncated cluster correlator to grow asymptotically, and hence violate the CDP, provided that the order of the correlator $N$ is non-vanishing. This possibility is particularly interesting in the context of QCD, as it provides a mechanism for which coloured degrees of freedom, such as quarks and gluons, can become confined. In this paper, we established a necessary and sufficient condition for a truncated cluster correlator to have $N=0$. It turns out that whether $N$ vanishes or not depends entirely on if the Fourier transform of the correlator defines a measure, which itself is determined by the structure of the spectral densities $\{\rho_{\alpha} \}$ of the correlator. Applying these results to QCD, it follows that if the indefinite inner product state space $\mathcal{V}_{\text{QCD}}$ has no mass gap, and any of the spectral densities of correlators involving coloured fields contains a $\delta_{a}'$ term, then this is sufficient to ensure confinement. In this context, the negativity of the Schwinger functions $C_{\alpha}(t)$ for the quark and gluon propagators on the lattice can therefore be interpreted as evidence that quarks and gluons are confined because they violate the CDP.

\section*{Acknowledgements}
I thank Thomas Gehrmann for useful discussions and input. This work was supported by the Swiss National Science Foundation (SNF) under contract CRSII2\_141847.

\renewcommand*{\cite}{\vspace*{-12mm}}

\end{document}